\def\be{\begin{equation}}
\def\ee{\end{equation}}
\def\ba{\begin{array}{c}}
\def\ea{\end{array}}
\def\ben{$$}
\def\een{$$}
\newcommand{\bea}{\begin{eqnarray}}
\newcommand{\eea}{\end{eqnarray}}
\newcommand{\kt}{\rangle}
\newtheorem{thm}{Theorem}
\newtheorem{lemma}[thm]{Lemma}
\newtheorem{defn}[thm]{Definition}
\newenvironment{proof}{\noindent
 {\bf Proof.}}{\hfill$\square$\vspace{3mm}\endtrivlist}
\begin{document}

\begin{center}

{\Large \bf

Quantum phase transitions mediated by
clustered non-Hermitian degeneracies

}

\vspace{0.8cm}

  {\bf Miloslav Znojil}

\vspace{0.2cm}

The Czech Academy of Sciences, Nuclear Physics Institute,

 Hlavn\'{\i} 130,
250 68 \v{R}e\v{z}, Czech Republic

\vspace{0.2cm}

 and

\vspace{0.2cm}

Department of Physics, Faculty of Science, University of Hradec
Kr\'{a}lov\'{e},

Rokitansk\'{e}ho 62, 50003 Hradec Kr\'{a}lov\'{e},
 Czech Republic

\vspace{0.2cm}

{e-mail: znojil@ujf.cas.cz}

\end{center}

%\end{document}

%\newpage

\section*{Abstract}

The phenomenon of degeneracy
of an $N-$plet of bound states
is studied in the framework
of the quasi-Hermitian
(a.k.a. ${\cal PT}-$symmetric)
formulation of quantum
theory of closed systems.
For a general non-Hermitian Hamiltonian $H=H(\lambda)$
such a degeneracy may occur at a real
Kato's exceptional point
$\lambda^{(EPN)}$ of order $N$
and of the geometric multiplicity
{\it alias\,} clusterization index
$K$.
The corresponding
unitary
process of
collapse (loss of observability)
can be then interpreted as
a generic
quantum phase transition.
The dedicated literature deals, predominantly,
with the non-numerical benchmark models
of the simplest processes
where $K=1$.
In our present paper it is shown that in the ``anomalous''
dynamical scenarios with
$1< K \leq N/2$ an analogous
approach is applicable.
A multiparametric
anharmonic-oscillator-type
exemplification of such systems
is constructed as a set of
real-matrix $N$ by $N$ Hamiltonians which
are exactly solvable,
maximally non-Hermitian and
labeled by
specific {\it ad hoc\,}
partitionings  ${\cal R}(N)$ of $N$.

\newpage

\section{Introduction}

The experimentally highly relevant phenomenon of a
quantum phase transition
during which at least one of the
observables loses its observability status
is theoretically elusive. In the conventional
non-relativistic Schr\"{o}dinger picture, for example,
the observability of the energy
represented by a self-adjoint
Hamiltonian $H=H(\lambda)$ appears too robust
for the purpose. In phenomenological applications,
for this reason,
the onset of phase transitions
is simulated, typically,
by an abrupt, discontinuous change of
the operators
at $\lambda=\lambda^{(critical)}$
\cite{Messiah,Landau}.

In relativistic quantum mechanics the situation is, paradoxically, under a
better theoretical control.
In the Klein-Gordon equation with Coulomb potential, for example,
the energy levels can merge and complexify at a finite,
{\em dynamically\,} determined critical
strength $\lambda^{(critical)}$ of the attraction \cite{Greiner}.
The operator representing energy remains unchanged.
One must only reconstruct a correct, sophisticated,
Hamiltonian-dependent
physical Hilbert space of states ${\cal H}={\cal H}(\lambda)$
offering a consistent
probabilistic interpretation of the system
up to $\lambda^{(critical)}$ \cite{aliKG}.

The latter result is one of applications of the
recent innovative formulation of quantum theory
in which a preselected (i.e., relativistic as well as
non-relativistic) Hamiltonian $H=H(\lambda)$ need not be
self-adjoint (concerning this theory we shall add more details
below; preliminarily, interested readers may consult, say, one of
reviews \cite{Geyer,Carl,SIGMA,ali,book}). This means that even for
non-Hermitian Hamiltonians (with real spectra)
and even in the dynamical regime close to
$\lambda^{(critical)}$, there exists
a feasible strategy of making
the evolution of the corresponding quantum system unitary.

In our recent paper \cite{corridors} we described a few
basic aspects of implementation
of the latter model-building strategy
in the case of quantum systems close to their
phase transition.
The key technical
aspects
of the theory
were illustrated using the
$N$-by-$N-$matrix Hamiltonians ${H}^{(N)}(\lambda)$
for which the critical
parameter
can be identified with a Kato's
\cite{Kato} exceptional point
of order $N$ (EPN,  $\lambda^{(critical)}\,\equiv\,\lambda^{(EPN)}$).
In such an arrangement
Hamiltonians  ${H}^{(N)}(\lambda)$
possessed the real and non-degenerate spectra at
$\lambda< \lambda^{(EPN)}$ while
exhibiting,
in the EPN limit,
the characteristic phase-transition behavior (a detailed
explanation will be given
in section \ref{statarta} below).

The method used in paper \cite{corridors}
was perturbative so that the
specification of
the eligible non-Hermitian Hamiltonians $H(\lambda)$ was merely
indirect. Thus, the results were complemented by
a subsequent technical
paper \cite{passage} in which several
closed-form Hamiltonians were
described. Unfortunately,
in
another, purely numerical study of the
phase-transition problem \cite{anomalous}
it has been revealed
that from the mathematical as well as
experiment-oriented points of view
the class of models considered in \cite{corridors}
and \cite{passage}
must be declared too narrow.
In the light of this observation the tone of the
overall conclusions
was rather discouraging.
The existence of
``not quite expected technical subtleties''
has been emphasized, with the
``word of warning \ldots supported by an explicit ill-behaved
illustrative matrix model'' \cite{anomalous}.

In our present paper
we will
prolong the latter studies
but, first of all,
we will strongly
oppose
the scepticism of their conclusions.
We will, first of all,
broaden the class of
Hamiltonians in a way which will
fill the gaps (see section \ref{pemm}).
As an unexpected mathematical byproduct of these efforts,
an unusual exhaustive
combinatorial classification of our class of
models
will be formulated and
summarized in Appendix A.

In section \ref{specide}
we shall show that the family of
our present solvable
anharmonic-oscillator-type
benchmark
models
covers
all of the mathematically admissible
realizations of the EPN-related
quantum phase transitions.
An explicit sample of our exhaustive classification pattern
will be presented, in section \ref{seduma},
up to $N=8$.
The overall discussion
(emphasizing, e.g., that
our classification is non-numerical,
circumventing all of the
above-mentioned ill-conditioning dangers)
and a concise summary
will be finally added in
sections \ref{speciae} and \ref{summary}.

\section{Quantum phase transitions\label{statarta}}

The main weakness of the models
of quantum phase transitions
as presented
in our preceding papers \cite{corridors} and \cite{passage}
lies in a restriction of their scope to
a fairly small subset of the mathematically
admissible unitary evolution scenarios.
An explanation of this restriction
becomes facilitated when one turns attention to the
models which are exactly solvable.

\subsection{Exceptional points of order $N$ in solvable non-Hermitian models}

In the simplest example taken from \cite{passage}
the overall discussion
of the phase transition processes was based on a detailed analysis of
the  $N$ by $N$
tridiagonal-anharmonic-oscillator (TAO)
Hamiltonians
 \be
 H^{(N)}_{\rm (TAO)}(\lambda)
 =\left[ \begin {array}{ccccc}
1-N&b_1(\lambda)
  &0&\ldots&0
  \\{}-b_1(\lambda)&3-N
  &\ddots
 &\ddots&\vdots
 \\{}0&-b_2(\lambda)
 &\ddots&b_2(\lambda)&0
 \\{}\vdots&\ddots&\ddots&N-3&b_{1}(\lambda)
 \\{}0&\ldots&0&-b_{1}(\lambda)&N-1
 \end {array} \right]\,.
 \label{pentoy}
 \ee
Such a model can be interpreted,
after an inessential shift of the
origin of the energy scale, as an
antisymmetric-matrix perturbation
of a
truncated diagonal matrix form of harmonic oscillator.
In the weak coupling limit the spectrum is
known of course. In the opposite,
strongly anharmonic dynamical regime the
localization of the
spectrum becomes, in general, numerical.
The Hamiltonian becomes dominated
by its off-diagonal part which is
chosen real, antisymmetric (i.e., maximally non-Hermitian)
and,
for the reasons explained in \cite{maximal},
${\cal PT}-$symmetric,
i.e., symmetric with respect to the second diagonal.

For the purposes of study
of the mechanisms of a unitary passage of quantum systems
through their EPN singularities the latter tridiagonal models
proved particularly suitable because
in a broad range of matrix element functions $b_j(\lambda)$
their spectra remained real.
After a more restricted choice
of functions $b_j(\lambda)$
these spectra appeared real up to
$\lambda = \lambda^{(EPN)}$
(cf. \cite{tridiagonal}).
At the critical value of
$\lambda^{(EPN)}$ all of the $N$ energy levels merged
while they ceased to be real beyond $\lambda^{(EPN)}$.

From our present point of view
the most relevant feature of the model
is that at any integer $N\geq 2$
the Hamiltonians degenerate,
in the phase-transition EPN limit, to the respective
non-diagonalizable but elementary,
closed-form $N$ by $N$ matrices
 \be
 H^{(2)}_{\rm (TAO)}(\lambda^{(EP2)})
 = \left [\begin {array}{cc} -1&1\\{}-1&1
 \end {array}\right
 ]\,,
 \ \ \ \ \ \
 H^{(3)}_{\rm (TAO)}(\lambda^{(EP3)})
  = \left [\begin {array}{ccc} -2&\sqrt{2}&0\\{}-\sqrt{2}&0
 &\sqrt{2}\\{}0&-\sqrt{2}&2\end {array}\right ],
 \label{epthisset}
 \ee
 \ben
 H^{(4)}_{\rm (TAO)}(\lambda^{(EP4)}) = \left [\begin {array}{cccc}
  -3&\sqrt{3}   &0  &0\\
 -\sqrt{3}&-1   &2  &0\\
  0&-2  &1 &\sqrt{3}\\
  0&0&-\sqrt{3}&3
 \end {array}\right ]\,,
 \ \ \ 
 H^{(5)}_{\rm (TAO)}(\lambda^{(EP5)})
 =\left [\begin {array}{ccccc} -4&2&0&0&0\\{}-2&
 -2&\sqrt{6}&0&0\\{}0&-\sqrt{6}&0&\sqrt{6}&0
 \\{}0&0&-\sqrt{6}&2&2\\{}0&0&0&-2& 4\end {array}\right ]\,
      % , \ldots\,.
 \een
etc \cite{passage}. The respective spectra degenerate to 
a single real 
value,
 \be
 \lim_{\lambda \to \lambda^{(EPN)}}\,E_n(\lambda)=\eta\,,\ \ \ \
 n=1,2,\ldots,N
  \,.
 \label{siesta}
 \ee
This is precisely the mathematical feature of the spectrum
which finds its physical interpretation of an instant of quantum
phase transition \cite{BM}.

In our particular models the value of $\eta=0$
remains $N-$independent. The complete EPN
degeneracy of energies (\ref{siesta})
proves accompanied by the complete degeneracy of
all of the related eigenvectors,
 \be
 \lim_{\lambda \to \lambda^{(EPN)}}\,|\psi_{n}^{(N)}(\lambda)\kt
 =|\chi^{(N)}(\lambda)\kt\,,\ \ \  \ \
 \,,\ \ \ \
 n=1,2,\ldots,N
  \,.
 \label{kuwinde}
 \ee
For our forthcoming analysis of the
mechanisms of the limiting loss-of-the-observability transition
it will be vital to know that
all of the matrix limits (\ref{epthisset})
can easily be transformed to their  unique,
``canonical'', $N$ by $N$
Jordan-matrix respective forms
 \be
 J^{\rm (N)}(\eta)
 =\left[ \begin {array}{ccccc}
  \eta&1&0&\ldots&0
  \\{}0&\eta&1 &\ddots&\vdots
 \\{}0&\ddots&\ddots&\ddots&0
 \\{}\vdots&\ddots&0&\eta&1
 \\{}0&\ldots&0&0&\eta
 \end {array} \right]\,.
 \label{heuu}
 \ee
The transformations are
mediated by the Schr\"{o}dinger-like
equation
 \be
 H^{(N)}_{(TAO)}(\lambda^{(EPN)})\,Q^{(N)}
 =Q^{(N)}\,J^{\rm (N)}(\eta)\,.
 \label{kano}
 \ee
For models (\ref{epthisset}), all of
the transition matrices $Q^{(N)}$ 
defined
by this equation are available 
in closed form
(see \cite{passage}).

\subsection{Numerical models and clustered non-Hermitian degeneracies}

%{Non-Hermitian degeneracies with clusterizations}

The empirical observations
published in the
strictly numerically oriented paper  \cite{anomalous}
indicate that
the tridiagonal-matrix
choice of models (\ref{pentoy}) may be over-restrictive.
Such a suspicion results from the completeness
of the degeneracy (\ref{kuwinde}) of the eigenvectors. Indeed,
from the point of view of linear algebra
or functional analysis
such a type of degeneracy represents
a special case \cite{Kato}.
In general one can only expect a weaker form
of such a degeneracy in which the
eigenvectors $|\psi_{n}^{(N)}(\lambda)\kt$ of the Hamiltonian
would form a $K-$plet
of clusters of the degenerating eigenvectors with the set of
subscripts $n=1,2,\ldots,N$ decomposed into $K$ non-overlapping
subsets $S_k$,
 \be
 \lim_{\lambda \to \lambda^{(EPN)}}\,|\psi_{n_k}^{(N)}(\lambda)\kt
 =|\chi_k^{(N)}(\lambda)\kt\,,\ \ \  \ \
 n_k \in S_k\,,\ \ \
 \ \
 k=1,2,\ldots,K\,.
 \label{kwinde}
 \ee
In the light of this
observation the results of papers \cite{corridors,passage}
must be reinterpreted as covering only a
subfamily of all of the possible
EPN-related quantum
phase transitions.
Any complete set of
benchmark Hamiltonians
$H^{(N)}(\lambda)$ must contain
matrices which are more general than tridiagonal.
In what follows our attention will be, therefore, redirected
to the full real matrices
of an analogous, antisymmetrically perturbed
general anharmonic oscillator (GAO)
form
 \be
 H^{(N)}_{\rm (GAO)}(\lambda)
 =\left[ \begin {array}{cccccc}
  1-N&b_1(\lambda)&c_1(\lambda)
  &d_1(\lambda)&\ldots&\omega_1(\lambda)
  \\{}-b_1(\lambda)&3-N
  &b_2(\lambda)
 &c_2(\lambda)&\ddots&\vdots
 \\{}-c_1(\lambda)&\ddots
 &\ddots&\ddots&\ddots&d_1(\lambda)
 \\{}-d_1(\lambda)&\ddots&-b_3(\lambda)
 &N-5&b_{2}(\lambda)&c_1(\lambda)
 \\{}\vdots&\ddots&-c_2(\lambda)&-b_{2}(\lambda)&N-3&b_1(\lambda)
 \\{}-\omega_1(\lambda)&\ldots&-d_1(\lambda)
 &-c_1(\lambda)&-b_{1}(\lambda)&N-1
 \end {array} \right]\,.
 \label{pefull}
 \ee
Our present paper could be read, in this light, as a
continuation and as an ultimate completion
of the project of Ref.~\cite{anomalous}.
{\it A priori}, one might be
critical towards such a project.
Indeed, from the numerical experiments
as performed in \cite{anomalous} it can be deduced
that
at the larger dimensions $N$
the work with tridiagonal Hamiltonians
is the only constructively
tractable
option.
In other words,
the study of the
EPN-supporting toy-model
Hamiltonians
$H^{(N)}_{\rm}(\lambda)$
should work either with the general tridiagonal
matrices as sampled above,
or with some of their non-tridiagonal
generalizations defined at
a few smallest matrix dimensions $N$
(thus, for example, the
``easily tractable'' maximum was
found at $N=6$ in \cite{anomalous}).
In our present paper
we will just describe a new, third option showing that
an amended and universal EPN-related model-building
strategy does
exist. 

Our results will be based on the use of 
non-tridiagonal
matrices which are sparse, rendering
the implementation of the strategy
feasible in applications.
The presentation
of the idea may start from the
replacement of the too special condition of
Eq.~(\ref{kano}) by the generalized, standard relation
 \be
 H^{(N)}_{(GAO)}(\lambda^{(EPN)})\,Q^{(N)}
 =Q^{(N)}\,{\cal J}^{[{\cal R}(N)]}(\eta)\,.
 \label{kanon}
 \ee
The superscript ${\cal R}(N)$ denotes here one of the
partitions of $N=N_1+N_2+\ldots +N_K$
that do not contain 1 as a part
(and are such that, say, $N_1\geq N_2\geq \ldots \geq N_K\geq 2$;
see \cite{A002865} or Table~\#1
in \cite{degdeg}).
The integer $K$
represents the above-mentioned clusterization index {\it alias\,}
geometric multiplicity of the EPN degeneracy \cite{Kato}.

As long as the first two
partitions ${\cal R}(2)=2$ and ${\cal R}(3)=3$
are unique (for both of them the geometric
EPN multiplicity $K$ is equal to one),
equations~(\ref{kano}) and (\ref{kanon})
remain the same at $N=2$ and $N=3$. The difference
reflecting the existence of non-trivial
multiplicities $K>1$
only emerges at $N=4$
where we can have ${\cal R}_1(1)=4$
(i.e., $K=1$) and ${\cal R}_2(4)=2+2$
(i.e., $K=2$).
Thus, in our present
amended model-building recipe we choose any $N\geq 4$,
pick up one of the
partitionings ${\cal R}(N)$,
skip the $K=1$ cases (which are well known)
and
define the block-diagonal matrix ${\cal J}^{[{\cal R}(N)]}(\eta)$
in the form of
direct sum of a $K-$plet of
elementary Jordan blocks,
 \be
  {\cal J}^{[{\cal R}(N)]}(\eta)= J^{(N_1)}(\eta) \oplus
  J^{(N_2)}(\eta) \oplus
  \ldots \oplus
  J^{(N_K)}(\eta)\,.
  \label{jobl}
 \ee
The first
alternative options
emerge at $N=4 = 2+2$ and at $N=5=3+2$, with the following two new,
$K=2$ direct sums
of the Jordan blocks,
 \be
 {\cal J}^{[2+2]}(\eta)
 =
 \left[ \begin {array}{cc|cc}
  \eta&1&0&0
  \\{}0&\eta&0
 &0
 \\\hline
 {}0&0&\eta&1
 \\{}0&0&0&\eta
 \end {array} \right]
 \,
 \ \ \ \
 {\rm and}\
 \ \ \
 {\cal J}^{[3+2]}(\eta)
 =
 \left[ \begin {array}{ccc|cc}
  \eta&1&0&0&0
  \\{}0&\eta&1&0
 &0
 \\
 {}0&0&\eta&0&0
 \\ \hline
 {}0&0&0&\eta&1
 \\{}0&0&0&0&\eta
 \end {array} \right]
 \,.
 \label{feuu}
 \ee
We are now prepared to turn attention to the
construction of multi-diagonal GAO models
with the full-matrix structure (\ref{pefull})
of their Hamiltonians,
and with the general
direct-sum structure (\ref{jobl})
of their canonical representation in the EPN limit.

\section{Systems with pentadiagonal-matrix Hamiltonians\label{pemm}}

\subsection{Elementary one-parametric model}

A few simulation of dynamics near EPNs
using multidiagonal $N$ by $N$ matrix Hamiltonians
were presented in paper \cite{anomalous}.
The scope of the study was restricted,
due to the apparently purely numerical nature
of the problem, to
the smallest matrix dimensions $N \leq 6$.
Such a restriction helped to keep
the necessary evaluations of the spectra
non-numerical.
Incidentally, the latter decision was a bit unfortunate
because, as we will see below, the next option with $N=7$
would have been perceivably more instructive.
Still, the key message of the study
remains significant:
the search for anomalous $K>1$
EPN singularities
with optional
geometric multiplicities
should be based on a systematic analysis of
non-tridiagonal, multi-diagonal matrix models.

In the light of this experience let us now turn attention
to the following pentadiagonal-matrix example
with $N=7$,
 \be
 {H^{\rm (toy)}}(\lambda)=  \left[ \begin {array}{ccccccc}
  1&0&\sqrt {3}g&0&0&0&0
 \\\noalign{\medskip}0&3&0&\sqrt {2}g&0&0&0
 \\\noalign{\medskip}-\sqrt {
3}g&0&5&0&{2}g&0&0
\\\noalign{\medskip}0&-\sqrt {2}g&0&7&0&\sqrt {2}g&0
 \\\noalign{\medskip}0&0&-{2}g&0&9&0&\sqrt {3}g
 \\\noalign{\medskip}0&0&0&-
 \sqrt {2}g&0&11&0\\\noalign{\medskip}0&0&0&0&-\sqrt {3}g&0&13
 \end {array} \right]\,.
 \label{tomograf}
 \ee
The unperturbed truncated harmonic-oscillator
spectrum is kept unshifted, the
anharmonicity is
antisymmetric, and the freedom of the
$\lambda-$dependence
of the
off-diagonal elements of the perturbation
is reduced to a single
function $g = g(\lambda)$.
Such a simplification implies that
the related Schr\"{o}dinger bound-state problem
 \be
 {H^{\rm (toy)}}(g)\,|
 {\psi_n}(g)\kt=E_n(g)\,| {\psi_n}(g)\kt\,%,
% \ \ \ \ \ n=0,1,\ldots, N-1
 \label{themo}
 \ee
becomes solvable exactly,
 $$
 E_0(g)=7\,,\ \ \ \ \
 E_{\pm 1}(g)
 =7\pm \sqrt {4-{g}^{2}}
 %=
%(11\mp {\frac {1}{2}}{g}^{2}\mp {\frac {1}{32}}{g}^{4}+O \left( {g}^{6}
% \right) )
$$
 $$
 E_{\pm 2}(g)
 =7\pm 2\,\sqrt {4-{g}^{2}}
 %=
%(11\mp {\frac {1}{2}}{g}^{2}\mp {\frac {1}{32}}{g}^{4}+O \left( {g}^{6}
% \right) )
\,,\ \ \ \
 E_{\pm 3}(g)
 =7\pm 3\,\sqrt {4-{g}^{2}}\,.
 %=
%(11\mp {\frac {1}{2}}{g}^{2}\mp {\frac {1}{32}}{g}^{4}+O \left( {g}^{6}
% \right) )
$$
The model exemplifies the system
in which the conventional
self-adjoint harmonic-oscillator dynamics is
realized
at $g=0$, and in which the small perturbations,
in spite of their maximal non-Hermiticity,
keep the spectrum real and, hence, observable,
in principle at least.

\subsubsection{Strong-coupling dynamical regime}

The above-listed explicit formulae demonstrate that
the whole spectrum of our toy model remains real up to
the EP7 limit of $g \to g^{(EP7)}=2$.
The bound-state energies remain real and
well separated
along a path connecting
the weakly-anharmonic (WA)  and
the strong-coupling (SC)
ends of
the
open interval of the values of
$g \in (0,2)$.
Even when one decides to consider
a nontrivial function $g=g(\lambda)$
of parameter $\lambda$ which would not deviate
too much from the linear one,
one can still deduce that there exists a
fairly broad corridor of unitarity
which connects the
harmonic-oscillator (HO) and the EP7 dynamical extremes.

In the
WA regime 
the optional auxiliary (and, say, monotonously increasing)
function $g(\lambda)$ is to be kept
small. Then, the
anharmonicity will remain
easily tractable by the standard
Rayleigh-Schr\"{o}dinger perturbation methods.
Near the opposite SC boundary where $g \lessapprox 2$
the EPN degeneracy (\ref{siesta})
is reached.
The reality of the spectrum becomes mathematically fragile.
Whenever the value of the coupling exceeds its critical
value of $g^{(EPN)}=2$, the whole spectrum becomes complex.

The latter form of the EPN-related
instability
is a real, measurable phenomenon. 
Its possible detection appeared to be a true challenge
during the popular
experimental simulations of quantum dynamics
via non-quantum systems
(cf., e.g., extensive reviews
of this point in \cite{Carlbook,Christodoulides}).
Fortunately, it has recently been clarified
that in the genuine closed quantum systems
living in an EPN vicinity
an experimental realization
of the similar instabilities in
the laboratory would be
much more complicated if not 
even impossible \cite{admissible}.
The subtle reasons of the existence of such a paradox
were briefly explained in \cite{Ruzicka}.
Their essence lies in the fact that
the
perturbations which would make
the system leave the physical Hilbert space ${\cal H}$
would be hard to define
in practically any quantum theory of closed systems
including not only its conventional textbook forms
but, equally well,
all of its various quasi-Hermitian \cite{Geyer},
pseudo-Hermitian \cite{ali}
or ${\cal PT}-$symmetric \cite{Carl} versions.
In the latter setting, indeed,
the
Hamiltonian-dependent choice of the
physical Hilbert space ${\cal H}$ is ambiguous \cite{ali}.
For this reason, {\em any\,} change of the
Hamiltonian reopens
the ambiguity problem
and, in this sense, makes the perturbation theory
nonlinear \cite{degdeg,Ruzicka}.

Our present model may be recalled for illustration purposes.
In it, we are allowed to
introduce a new small
parameter $\kappa=\kappa(\lambda)\in (0,1)$ and to
redefine $g=\widetilde{g}(\kappa)=2\,(1-\kappa^2)$.
This enables us to consider the
related (``tilded'') modification of our
spectral problem (\ref{themo}) with
the same exact eigenvalues
rewritten
in an equivalent but SC-friendlier form
 $$
 \widetilde{E}_0(\kappa)=7\,,\ \ \ \ \
 \widetilde{E}_{\pm 1}(\kappa)
 =7 \pm 2\,\sqrt {-{\kappa}^{4}+2\,{\kappa}^{2}}
 \sim 7 \pm 2\,\sqrt {2}\,{\kappa}+ {\cal O}(\kappa^3)\,,
 $$
 %$$\approx (7+2\,\sqrt {2}\kappa-1/2\,\sqrt {2}{\kappa}^{3}-1/16\,\sqrt {2}{
%\kappa}^{5}-{\frac {1}{64}}\,\sqrt {2}{\kappa}^{7}+O \left( {\kappa}^{
%9} \right) )
% $$
  $$
  \widetilde{E}_{\pm 2}(\kappa)=7 \pm 4\,\sqrt {-{\kappa}^{4}
  +2\,{\kappa}^{2}}
 \,,\ \ \ \ \
  \widetilde{E}_{\pm 3}(\kappa)
  =7 \pm 6\,\sqrt {-{\kappa}^{4}+2\,{\kappa}^{2}}\,.
 $$
Also in this representation these eigenvalues remain all real
at small $\kappa$,
reconfirming the existence of a corridor of unitarity
connecting the WA and SC dynamical-regime extremes.

%
%(4)
%picture: three EPs due to symmetry (artificial)
%
%. $ H^{\rm (toy)}(0)$ $ H^{\rm (toy)}(0)$
%
%plot({Jo7[1,1],Jo7[2,2],Jo7[3,3],(Jo7[4,4])
%> ,(Jo7[5,5]),Jo7[6,6],Jo7[7,7]},
%> kappa=-1.59..1.59,color=black,tickmarks=[3,3],axes=framed);
%

\subsubsection{Canonical form of the Hamiltonian}

In the SC EP7 limit $\kappa \to 0$
the spectrum becomes degenerate and the
Hamiltonian itself ceases to be diagonalizable.
It may be shown to possess just two eigenvectors
so that
the EPN value $\eta=7$ of the energy can be used in
the canonical eigenvalue problem (\ref{kanon})
where
 \be
 {\cal J}^{\rm (4+3)}(\eta)
 =\left[ \begin {array}{cccc|ccc}
  \eta&1&0&0&0&0&0
  \\{}0&\eta&1
 &0&0&0&0
 \\{}0&0&\eta&1&0&0&0
 \\{}0&0&0&\eta&0
 &0&0
 \\
 \hline
 0&0&0&0&\eta&1&0
 \\{}0&0&0&0&0&\eta&1
 \\{}0&0&0&0&0&0&\eta
 \end {array} \right]
 =
 \left[ \begin {array}{cccc}
  \eta&1&0&0
  \\{}0&\eta&1
 &0
 \\{}0&0&\eta&1
 \\{}0&0&0&\eta
 \end {array} \right]
 \oplus
 \left[ \begin {array}{ccc}
  \eta&1&0
  \\{}0&\eta&1
 \\{}0&0&\eta
 \end {array} \right]
 \,.
 \label{euu}
 \ee
This is a direct sum
of
two Jordan-block matrices.
The partitioning
merely guides the eye and emphasizes the fact
that in our toy-model Hamiltonian
(\ref{tomograf})
there is no mutual coupling between
the even and odd indices. In the EP7 limit,
our Hamiltonian $H^{\rm (toy)}(\lambda^{(EP7)})$
may be interpreted
as a direct sum of the two independent components,
 \be
  H^{\rm (toy)}(\lambda^{(EP7)})
  =H^{[odd]}_{(EP7)}\oplus H^{[even]}_{(EP7)}
  \label{limits}
  \ee
where
 $$
  H^{[odd]}_{(EP7)}=
 \left[ \begin {array}{cccc}
 1&2\,\sqrt {3}&0&0
  \\-2\, \sqrt {3}&5&4&0
 \\0&-4&9&2\,\sqrt {3}
 \\0&0
 &-2\,\sqrt {3}&13
 \end {array} \right]\,,
 \ \ \ \ \
  H^{[even]}_{(EP7)}=
 \left[ \begin {array}{ccc}
 3&2\,\sqrt {2}&0
  \\-2\, \sqrt {2}&7&2\,\sqrt {2}
 \\0
 &-2\,\sqrt {2}&11
 \end {array} \right]
 \,.
 $$
Off the EP7 limit, our specific pentadiagonal
matrix $H^{\rm (toy)}(\lambda)$ still has
the form of a direct sum of the two tridiagonal
matrices.
Due to our specific choice of the model, they are both
exactly solvable so that the whole $N=7$ model is
solvable in closed form as well.
Moreover,
arbitrary {\it ad hoc\,} perturbation terms may be added to
couple the components of the direct sum -- see, e.g.,
an application of such an idea to
a realistic model of such a type in \cite{without}.

A
specific consequence of the simplicity
of our present perturbed model
lies in
the
availability of the explicit  transition-matrix
solution
of Eq.~(\ref{themo}),
 $$
 Q^{\rm (toy)}= \left[ \begin {array}{ccccccc} -48&24&-6&1&0&0&0\\{}0
&8&-4&1&8&-4&1\\{}-48\,\sqrt {3}&16\,\sqrt {3}&-2\,
\sqrt {3}&0&0&0&0\\{}0&8\,\sqrt {2}&-2\,\sqrt {2}&0&8
\,\sqrt {2}&-2\,\sqrt {2}&0\\{}-48\,\sqrt {3}&8\,
\sqrt {3}&0&0&0&0&0\\{}0&8&0&0&8&0&0
\\{}-48&0&0&0&0&0&0\end {array} \right]\,.
 $$
Indeed, in the
perturbation constructions of the SC states
the role of
unperturbed basis
is relegated, in natural manner, to transition matrices
(see more details in \cite{degdeg,without}).

\subsection{Multiparametric Hamiltonians}

The tricks used in connection with the
one-parametric toy model (\ref{tomograf}) can immediately be
applied to the  $N-$dimensional
model
 \be
 H^{(N)}_{\rm (pent. special)}(\lambda)
 =\left[ \begin {array}{c|c|c|c|c|c}
  1-N&0&c_1(\lambda)
  &0&\ldots&0
  \\ \hline0&3-N
  &0
 &\ddots&\ddots&\vdots
 \\ \hline-c_1(\lambda)&0
 &\ddots&\ddots&c_2(\lambda)&0
 \\ \hline0&\ddots&\ddots&N-5&0&c_1(\lambda)
 \\ \hline\vdots&\ddots&-c_2(\lambda)&0&N-3&0
 \\ \hline0&\ldots&0&-c_1(\lambda)&0&N-1
 \end {array} \right]\,
 \label{cepez}
 \ee
with an elementary shift of the origin on the energy scale
and with
the parallels in structure
emphasized by the partitioning.
Obviously, all of these matrices are
equal to direct sums of
two tridiagonal matrices, viz.,
 \be
 H^{(N)}_{\rm (component\ one)}(\lambda)
 =\left[ \begin {array}{cccc}
  1-N&c_1(\lambda)
  &0&\ldots
 \\ -c_1(\lambda)&5-N
 &c_3(\lambda)&\ddots
 \\ 0&-c_3(\lambda)&9-N
 &\ddots
 \\ \vdots&\ddots&\ddots&\ddots
 \end {array} \right]\,
 \label{apezerb}
 \ee
and
 \be
 H^{(N)}_{\rm (component\ two)}(\lambda)
 =\left[ \begin {array}{cccc}
  3-N&c_2(\lambda)
  &0&\ldots
 \\ -c_2(\lambda)&7-N
 &c_4(\lambda)&\ddots
 \\ 0&-c_4(\lambda)&11-N
 &\ddots
 \\ \vdots&\ddots&\ddots&\ddots
 \end {array} \right]\,.
 \label{bepeze}
 \ee
Due to our assumption of maximal
non-Hermiticity (i.e., of antisymmetry) of
perturbations
(i.e., of anharmonicities),
their diagonal component will always
vanish.
Thus, without confusion, we may refer to the matrix of
Eq.~(\ref{cepez}) by
its main diagonal put in the box,
$\fbox{1-N,3-N,\ldots,N-1}$.
Its direct-sum decomposition into
components (\ref{apezerb}) and (\ref{bepeze})
can be then
written in shorthand,
 $$
 \fbox{1-N,3-N,\ldots,N-1}=\fbox{1-N,5-N,9-N, \ldots} \oplus
 \fbox{3-N,7-N,11-N,\ldots}\,.
 $$
The last elements of the summands are not displayed because they
vary with
the parity of $N$.
After the explicit specification of
the parity of $N$
we
arrive at the following two conclusions.

\begin{lemma}
At the even matrix dimension $N=2J$,
the decomposition
of the pentadiagonal
sparse-matrix model (\ref{cepez})
into its
tridiagonal TAO components (\ref{apezerb}) and (\ref{bepeze})
only supports the two $K=1$ EPJ limits (\ref{epthisset}),
with different respective
energies $\eta = \pm 1$. At any one of them, the confluence
in Eq.~(\ref{siesta}) is incomplete, involving just $J$ levels.
\label{lemma1}
\end{lemma}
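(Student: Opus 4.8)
The plan is to reduce the statement entirely to the already-established EPJ behaviour of the tridiagonal TAO family (\ref{pentoy})--(\ref{epthisset}), via the direct-sum decomposition of (\ref{cepez}) recorded just above. First I would use that decomposition: since the first off-diagonals of (\ref{cepez}) vanish, separating the odd-indexed from the even-indexed basis vectors splits the model into the direct sum of the two tridiagonal summands (\ref{apezerb}) and (\ref{bepeze}). At the even dimension $N=2J$ the index set $\{1,2,\ldots,2J\}$ contains exactly $J$ odd and $J$ even members, so each summand is a $J$ by $J$ matrix, contributes $J$ of the $N=2J$ eigenvalues of the full Hamiltonian, and, in any non-diagonalizable limit, the canonical (Jordan) form of the whole matrix is the direct sum of the canonical forms of the two summands.

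Next I would identify each summand, up to an affine rescaling of the energy, with a genuine $J$ by $J$ TAO Hamiltonian of the form (\ref{pentoy}). At $N=2J$ the diagonal of (\ref{apezerb}) is $(1-2J,\,5-2J,\,\ldots,\,2J-3)$, the image of the $J$ by $J$ TAO diagonal $(1-J,\,3-J,\,\ldots,\,J-1)$ under $x\mapsto 2x-1$; the diagonal of (\ref{bepeze}) is $(3-2J,\,7-2J,\,\ldots,\,2J-1)$, the image of the same TAO diagonal under $x\mapsto 2x+1$. The off-diagonal couplings $c_1,c_3,\ldots$ (respectively $c_2,c_4,\ldots$) inherited from (\ref{cepez}) keep the antisymmetric, anti-diagonal-symmetric (${\cal PT}$-symmetric) pattern demanded by (\ref{pentoy}), so after rescaling by $1/2$ they are admissible TAO coupling functions. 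Hence the summands (\ref{apezerb}) and (\ref{bepeze}) equal, respectively,
$$ 2\,\widehat{H}^{(J)}_{\rm (TAO)}(\lambda)-I \qquad {\rm and} \qquad 2\,\widetilde{H}^{(J)}_{\rm (TAO)}(\lambda)+I\,, $$
with $\widehat{H}^{(J)}_{\rm (TAO)}$, $\widetilde{H}^{(J)}_{\rm (TAO)}$ genuine $J$ by $J$ TAO Hamiltonians and $I$ the $J$ by $J$ unit matrix.

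Then I would simply transport the known TAO facts. For every integer $J\ge 2$ there is a choice of coupling functions keeping the TAO spectrum real along a corridor and producing, in the EPJ limit, the complete confluence (\ref{siesta}) to the single value $0$, with the Hamiltonian becoming similar to one Jordan block (\ref{heuu}) of size $J$; indeed, since the antisymmetric perturbation is traceless, trace conservation forces the confluence value to coincide with the diagonal midpoint. An overall factor $2$ leaves that midpoint value $0$ in place, while the shifts $\mp I$ move it to $-1$ and $+1$; similarity, hence the Jordan-block size, is unchanged. Tying all couplings to one monotone function of $\lambda$ makes both summands hit their exceptional points at the same $\lambda^{(EPN)}$, so there the full Hamiltonian is similar to $J^{(J)}(-1)\oplus J^{(J)}(+1)$: two Jordan blocks, but carrying two \emph{different} eigenvalues. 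This is not a clustered EPN of the type (\ref{jobl}); the single-valued confluence (\ref{siesta}) fails and one instead gets two separate, incomplete confluences, of $J$ levels each, to $\eta=-1$ and $\eta=+1$. Finally, the word ``only'' follows by reversing the reasoning: a complete EPN of the full matrix forces each tridiagonal summand to an exceptional point of the maximal order $J$, and the TAO analysis (together with the midpoint argument) shows such an EPJ to be unique and located at $\eta=\mp 1$, so no other EPN-type limit is available within the family (\ref{cepez}).

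The only step needing genuine care is the affine identification: one must check that after the factor $1/2$ the off-diagonal pattern of (\ref{cepez}) matches, index for index, the admissible palindromic TAO pattern for dimension $J$, and one must track the additive shift so that the confluence value is correctly carried from $0$ to $\mp 1$. The direct-sum split, the $J$-versus-$J$ index count at $N=2J$, and the appeal to the established EPJ confluence are all immediate from material already in hand.
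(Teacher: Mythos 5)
Your proposal is correct and follows essentially the same route as the paper: split (\ref{cepez}) into the odd- and even-index tridiagonal summands, observe that their diagonals are arithmetic progressions centred at $-1$ and $+1$ respectively (the paper phrases this as the central interval being ``too short'' so that the elements $\pm 1$ get distributed between the two components, leaving each centrally asymmetric), and conclude that the two EPJ confluences occur at distinct energies $\eta=\mp 1$, so no single clustered EPN arises. Your affine identification $2\widehat{H}^{(J)}_{\rm (TAO)}\mp I$ and the trace/midpoint argument merely make explicit what the paper leaves implicit.
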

\begin{proof}
The main diagonal
$\fbox{1-N,3-N,\ldots,N-1}= \fbox{1-2J,3-2J,\ldots,2J-1}$
of matrix (\ref{cepez}) does not contain a central zero.
The central interval $(-1,1)$
is ``too short''. Its two elements $-1$ and $1$
get distributed among both of the
components (\ref{apezerb}) and (\ref{bepeze}). In the
resulting direct sum
  $$
  \fbox{1-N,3-N,\ldots,N-1}= %\fbox{1-2J,3-2J,\ldots,2J-1}=
  \fbox{1-2J,5-2J, \ldots,2J-3}\oplus
  \fbox{3-2J,7-2J,\ldots,2J-1}\,,
  $$
both of the components will be centrally asymmetric.
\end{proof}

 \noindent
The search for an anomalous EPN with $K=2$ failed.
Even after a successful $J$ by $J$ realization
of the two separate EPJ limits using building blocks (\ref{epthisset}),
requirement~(\ref{siesta}) will only offer
two different
values of the
eligible limiting EPN energies.
The direct-sum decomposition yields the two
non-anomalous $K=1$ EPNs
of the same small order  $J=N/2$. A better result is the next one.

\begin{lemma}
At odd $N=2J+1$, both of the
tridiagonal matrices (\ref{apezerb}) and (\ref{bepeze})
admit the respective realizations (\ref{epthisset})
of their EPN limits.
The related energies coincide so that
the direct sum (\ref{cepez})
admits the anomalous
EPN limit with
geometric multiplicity two.
\label{lemma2}
\end{lemma}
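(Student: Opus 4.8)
\medskip

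\noindent\textbf{Proof plan.}
The plan is to repeat, for a general odd dimension $N=2J+1$, the construction that is already carried out explicitly for the pentadiagonal toy model (\ref{tomograf}) at $N=7$. First I would use the direct-sum decomposition recorded above: the sparse matrix (\ref{cepez}) is the block-diagonal sum of the $(J+1)$-by-$(J+1)$ tridiagonal matrix (\ref{apezerb}), built from the odd-indexed rows and columns, and of the $J$-by-$J$ tridiagonal matrix (\ref{bepeze}), built from the even-indexed ones. The two main diagonals are arithmetic progressions with common difference four, and the decisive observation is that the index sets $\{1,3,\ldots,2J+1\}$ and $\{2,4,\ldots,2J\}$ are each invariant under the reflection $m\mapsto (2J+2)-m$ about the central index $J+1$, at which the diagonal entry of (\ref{cepez}) vanishes. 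Hence \emph{both} subdiagonals are symmetric about the same value $\eta$ (the origin, in the shifted convention of (\ref{cepez})). This is precisely the feature that fails in the even case of Lemma~\ref{lemma1}, where the two middle entries $\pm1$ are split between the two components and pin their centres to distinct values.

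Next I would perform the obvious rescaling. Since the common difference is four rather than two, the component (\ref{apezerb}) equals $2$ times a tridiagonal matrix whose main diagonal is the canonical TAO profile $1-(J+1),3-(J+1),\ldots,(J+1)-1$, and the component (\ref{bepeze}) equals $2$ times a tridiagonal matrix with TAO profile $1-J,3-J,\ldots,J-1$. I would then fix the off-diagonal functions of (\ref{cepez}) by taking the odd-subscripted $c_i(\lambda)$ equal to $2b^{(J+1)}_i(\lambda)$ and the even-subscripted ones equal to $2b^{(J)}_i(\lambda)$, where $b^{(M)}_i(\lambda)$ are the tridiagonal matrix-element functions of the TAO family of \cite{passage,tridiagonal} that keep the spectrum real along a corridor of unitarity and drive the block to its maximal-order exceptional point, at which the off-diagonal values become $\sqrt{i(M-i)}$ as in (\ref{epthisset}). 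The normalisation is chosen so that the two blocks reach their (order-$(J+1)$ and order-$J$) exceptional points at \emph{one and the same} critical coupling $\lambda^{(EPN)}$; in the one-parametric realisation this coincidence is automatic, because the limiting values $\sqrt{i(M-i)}$ render all off-diagonal entries proportional to a single function, while in the multiparametric model one simply uses the free parameters to synchronise the two critical points.

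Then I would read off the canonical form. At $\lambda=\lambda^{(EPN)}$ the already-established TAO result recorded around (\ref{epthisset})--(\ref{kano}) tells us that each rescaled tridiagonal block is similar to a single Jordan block whose eigenvalue equals the centre of its diagonal; the extra factor $2$ on the superdiagonal is absorbed by a diagonal similarity, so (\ref{apezerb}) is similar to $J^{(J+1)}(\eta)$ and (\ref{bepeze}) to $J^{(J)}(\eta)$ with the \emph{same} $\eta$. Choosing $Q^{(N)}$ in (\ref{kanon}) to be the block-diagonal sum of the two TAO transition matrices then carries (\ref{cepez}), evaluated at criticality, to ${\cal J}^{[(J+1)+J]}(\eta)=J^{(J+1)}(\eta)\oplus J^{(J)}(\eta)$, which is exactly a matrix of the direct-sum type (\ref{jobl}) with clusterization index $K=2$ and partition ${\cal R}(N)=(J+1)+J$. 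Since both Jordan blocks carry the identical eigenvalue $\eta$, the full $N$-plet of energies undergoes a complete confluence in the sense of (\ref{siesta}), while the Hamiltonian retains exactly two eigenvectors; this is the asserted anomalous exceptional point of geometric multiplicity two.

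I expect the one genuinely substantive point to be the synchronisation of the two exceptional points, i.e., that the order-$(J+1)$ collapse of the first block and the order-$J$ collapse of the second occur at the same $\lambda$. It is what forces the balanced partition $(J+1)+J$ rather than an arbitrary one, and it is exactly the obstruction that makes the even case of Lemma~\ref{lemma1} fail. Everything else is bookkeeping: the reflection symmetry of the two index sets, the trivial rescaling of each block by $2$, and the block-diagonal assembly of the TAO data; and the reality of the spectrum for $\lambda<\lambda^{(EPN)}$ reduces, block by block, to the corridor-of-unitarity properties already known for the tridiagonal TAO family \cite{tridiagonal}.
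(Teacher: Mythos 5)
Your proposal is correct and follows essentially the same route as the paper: decompose the pentadiagonal matrix (\ref{cepez}) into the odd-index $(J+1)$-dimensional and even-index $J$-dimensional tridiagonal blocks, observe that both main diagonals are centrally symmetric arithmetic progressions centred at zero (the paper phrases this as the central triplet $(-2,0,2)$ splitting into a $0$ and a doublet $(-2,2)$), and conclude that both blocks realize TAO-type EPN limits with the coinciding energy $\eta=0$, yielding the $K=2$ clusterization. Your added bookkeeping on the factor-of-two rescaling of each block and on synchronising the two critical couplings is a legitimate elaboration of details the paper leaves implicit, not a different argument.
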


\begin{proof}
We have
  \be
  \fbox{1-N,3-N,\ldots,N-1}= \fbox{-2J,2-2J,\ldots,2J}=
  \fbox{-2J,4-2J, \ldots,2J}\oplus
  \fbox{2-2J,6-2J,\ldots,2J-2}
  \label{ledva}
  \ee
so that out of
the central triplet of integers $(-2,0,2)$,
the doublet  $(-2,2)$ remains long enough to be
a component of one of the sub-boxes.
Their respective dimensions $J+1$ and $J$
are now different.
This is compensated by the central symmetry
of the summands and by
the coincidence of the
EPN energies,
$\eta_\pm =0$.
In
the direct sum (\ref{cepez})
the respective EPN limits
degenerate to a single,  anomalous
EPN limit. The $K=2$ clusterization
(\ref{kwinde}) takes place.
\end{proof}

The highly plausible one-to-one correspondence
between the
tridiagonality of the Hamiltonian and the $K=1$
form of its EPN limit as
conjectured in \cite{anomalous}
is now complemented
by our two Lemmas which confirm
that in a search for models with larger $K$
even the pentadiagonality assumption
need not help too much.
Nevertheless, the general pentadiagonal models
 \be
 H^{(N)}_{\rm (pentadiagonal)}(\lambda)
 =\left[ \begin {array}{cccccc}
  1-N&b_1(\lambda)&c_1(\lambda)
  &0&\ldots&0
  \\{}-b_1(\lambda)&3-N
  &b_2(\lambda)
 &\ddots&\ddots&\vdots
 \\{}-c_1(\lambda)&-b_2(\lambda)
 &\ddots&\ddots&c_2(\lambda)&0
 \\{}0&\ddots&\ddots&N-5&b_{2}(\lambda)&c_1(\lambda)
 \\{}\vdots&\ddots&-c_2(\lambda)&-b_{2}(\lambda)&N-3&b_1(\lambda)
 \\{}0&\ldots&0&-c_1(\lambda)&-b_{1}(\lambda)&N-1
 \end {array} \right]\,
 \label{epen}
 \ee
offer another methodical inspiration
paving the way towards
the full-matrix scenario.
The idea is based on an additional
assumption that all of the matrix elements
$b_n(\lambda)$ are small. Then,
the decomposition
 \be
 H^{(N)}_{\rm (pentadiagonal)}(\lambda)
 = H^{(N)}_{\rm (pent. special)}(\lambda) + {\rm small \
 perturbations}
 \label{perpnt}
 \ee
could prove tractable by perturbation
techniques \cite{degdeg}.

\section{Multidiagonal solvable models\label{specide}}

The freedom of choice of any positive integer $K$
is desirable, but
the task is
ill-conditioned \cite{Wilkinson}.
The numerical localization of the EPN
degeneracies is difficult
at larger $N$s. This is well sampled,
e.g., in Ref.~\cite{bhgt6}.
We will only use non-numerical strategies in
our model-building project, therefore.

\subsection{Clusterization\label{duma}}

The core and essence of our forthcoming 
general non-numerical
constructions will lie in the mere generalization
of Eq.~(\ref{perpnt}), i.e., in 
an application of the idea that 
it makes sense to have 
some of the ``unfriendly'' GAO matrix elements 
re-classified as ``small perturbations''
(which could be, in the first run,  
neglected and omitted).
Such a reduction should help us to 
obtain a 
non-numerically tractable structure 
(analogous to matrix (\ref{cepez})) 
which could be factorized into a 
$K-$plet of solvable TAO components
(sampled, at $K=2$, by (\ref{apezerb}) and (\ref{bepeze})).

In a way
inspired by the 
pentadiagonal-matrix Lemma \ref{lemma2}
and, in particular, 
by the direct-sum decomposition (\ref{ledva}),
also the general GAO
full-matrix
Hamiltonian (\ref{pefull}) may still be 
identified and represented
by its
left-right antisymmetric main diagonal
put in a box,
 \be
 \fbox{1-N,3-N,\ldots,N-3,N-1}\,.
 \label{bosy}
 \ee
Formally, 
such a boxed symbol can be decomposed as follows,
 \be
 \fbox{1-N,3-N,\ldots,N-1}=
 \fbox{1-N,N-1}\,  \oplus  \fbox{3-N,5-N,\ldots,N-3}\,.
 \label{symbs}
 \ee
Naturally, the first right-hand-side component
$\fbox{1-N,N-1}$ of this decomposition could already 
represent the required
TAO type 
Hamiltonian matrix, provided only that
all of the ``unfriendly'' elements are omitted from
the outer rows and columns
of the initial Hamiltonian, yielding
 \be
 H^{(N)}_{\rm (spec.partit.)}(\lambda)
 =\left[ \begin {array}{c|cccc|c}
  1-N&0&0
  &\ldots&0&\omega_1(\lambda)
  \\ \hline 0&3-N
  &b_2(\lambda)
 &\ldots&z_2(\lambda)&0
 \\ 0&-b_2(\lambda)
 &\ddots&\ddots&\vdots&0
 \\ \vdots&\vdots&\ddots&N-5&b_{2}(\lambda)&\vdots
 \\ {} 0&-z_2(\lambda)&\ldots&-b_{2}(\lambda)&N-3&0
 \\ \hline -\omega_1(\lambda)&0&0&\ldots&0&N-1
 \end {array} \right]\,.
 \label{pepart}
 \ee
This enables us to decopose
 $$
 H^{(N)}_{\rm (spec.partit.)}(\lambda)
 =
 \left [(N-1)\times H^{(2)}_{\rm (toy)}(\lambda)\right ]
 \oplus H^{(N-2)}_{\rm (full)}(\lambda)\,
 $$
where the dimension of the second 
full-matrix component is diminished.
Thus, the construction of one of the possible direct-sum
decompositions
could be completed iteratively, with the ultimate
result 
preserving the TAO form of all of its components.

We are now prepared to
search for all of the other $K-$term
generalizations of the $K=2$
direct-sum expansion
(\ref{ledva}).
It is worth emphasizing that for the 
reasons illustrated in Lemma \ref{lemma1}
our fundamental methodical
requirement of the
non-numerical tractability of the EPN limits
of the GAO models
is in a one-to-one correspondence with the constraint
that all of
the 
components of
their $K-$term direct-sum expansions
must keep having the specific TAO form,
represented by the  
centrally antisymmetric boxed symbols.
Thus, an exhaustive classification of 
all of the possible direct-sum decompositions
of the initial symbol (\ref{bosy})
becomes an interesting
combinatorial problem with the solution
described in
Appendix A below.

\subsection{General direct-sum decompositions}

The unitary evolution scenarios
characterized by an incomplete,
``anomalous'' $K > 1$
degeneracy
of eigenstates
are all equally important \cite{anomalous}.
The present continuation of their analysis
will be inspired by Eq.~(\ref{limits}) where $K=2$.
We will fix
the parameter $\lambda=\lambda^{(EPN)}$
and assume 
an analogous GAO direct-sum decomposition
valid, in the EPN limit with specific $\eta=0$, at any
preselected dimension $N$ and multiplicity $K$,
 \be
  H^{\rm (N)}_{(GAO)}(\lambda^{(EPN)})
  =\widetilde{H^{(N_1)}}(\lambda^{(EPN)})\oplus
  \widetilde{H^{(N_2)}}(\lambda^{(EPN)})\oplus \ldots \oplus
  \widetilde{H^{(N_K)}}(\lambda^{(EPN)})\,.
    \label{limitsN}
  \ee
We should only keep in mind
that the value of the geometric multiplicity
$K$ cannot exceed $N/2$.
We will also insist on the non-numerical
tractability of the model.
Most easily, this goal will be achieved
by the requirement that
all of
the separate $N_j$-dimensional tilded
matrix components
of Hamiltonian (\ref{limitsN})
are elements of the above-mentioned TAO-Hamiltonian
family (\ref{epthisset}),
 \be
 \widetilde{H^{(N_j)}}(\lambda^{(EPN)})=
 c_j\,H^{(N_j)}_{(TAO)}(\lambda^{(EPN)})\,,\ \ \ \
 j = 1,2,\ldots,K\,.
 \label{rule0}
 \ee
The freedom of choice of $K$ different
normalization constants $c_j$
will not destroy the non-numerical form
and solvability of the model.
In a small vicinity of the EPN singularity
the
exact solvability of the TAO toy-models (\ref{pentoy})
will survive.
We may, therefore,
extend the definition of the model,
accordingly,
to parameters $\lambda < \lambda^{(EPN)}$
which do not lie too far from $\lambda^{(EPN)}$,
 \be
  H^{\rm (N)}_{(GAO)}(\lambda)
  =\widetilde{H^{(N_1)}}(\lambda)\oplus
  \widetilde{H^{(N_2)}}(\lambda)\oplus \ldots \oplus
  \widetilde{H^{(N_K)}}(\lambda)
  \ +  {\rm small\ corrections}\,.
    \label{limitsNK}
  \ee
For the sake of simplicity
let us ignore the corrections, and
let us only consider the
components with
weights
which remain $\lambda-$independent,
 \be
 \widetilde{H^{(N_j)}}(\lambda)=
 c_j\,H^{(N_j)}_{(TAO)}(\lambda)\,,\ \ \ \
 j = 1,2,\ldots,K\,.
 \label{rule0K}
 \ee
The parameter $\lambda$ may now decrease to zero
in a way which parallels the behavior of the tridiagonal
TAO models (\ref{pentoy}) of Refs.~\cite{passage,maximal,tridiagonal}.

During the process the
direct sum decomposition of the Hamiltonian
(i.e., the exact solvability of the
$K>1$ models~(\ref{limitsNK}) where we omitted
the ``correction'' term) will survive. Unfortunately,
for a general real $K-$plet of the
normalization constants $c_j$
the spectrum of model (\ref{limitsNK}) would be,
in the $\lambda \to 0$ limit,
non-equidistant.
This means that
in such a limit
our system would {\em not\,}
mimic the truncated harmonic oscillator.
This would be a truly unpleasant
feature of the model, especially in the context of
perturbation theory.
In our present paper the
equidistance of the unperturbed
$\lambda=0$ spectrum of the GAO model
will be, therefore, added as an independent WA postulate.

\begin{table}[h]
\caption{The list of all of the alternative TAO-direct-sum
decompositions (\ref{limitsNK})
of the GAO Hamiltonian (\ref{limits})
with symbol/label $\fbox{1-N,3-N,\ldots,N-1}$
at $N=6$ and $K>1$.}
 \label{dowe} \vspace{.4cm}
%\centering \label{tataj}
\centering
\begin{tabular}{||c|c|ccc|c||}
    \hline \hline
    %&&&&&\\
   \multicolumn{6}{||c||}{GAO label $\fbox{-5,-3,-1,1,3,5}$}\\
   \hline
   \hline
    $K$&${\cal R}(6)$&$j$&$N_j$&$c_j$&
    {\rm TAO}$_j$ label\\
\hline
\hline
%  {\rm observable Hamiltonian} &  $H$  & $\widetilde{H}$&$H$ &$H$\\
 %$=\Omega^\dagger\Omega$, the
  2&4+2&1&4&1& $\fbox{-3,-1,1,3}$\\
  &&2&2&5& $\fbox{-5,5}$\\
 \hline
  3&2+2+2&1&2&1& $\fbox{-1,1}$\\
  &&2&2&3& $\fbox{-3,3}$\\
  &&3&2&5& $\fbox{-5,5}$\\
  \hline
 \hline
\end{tabular}
\end{table}

The latter requirement will
restrict the freedom of our choice of the
normalization constants $c_j$
in Eq.~(\ref{rule0K}) of course. From the practical physical,
phase-transition-oriented point of view,
such a restriction appears 
acceptable, being rather severe only
at the
not too large integers $N$ and $K$.
This is illustrated in Table \ref{dowe}.
It shows that just two alternative
GAO models with $K>1$ will exist at $N=6$.
Nevertheless, the benefits of the
exact solvability of the restricted GAO models
will certainly prevail at the larger
matrix-dimensions because with the growth of
$N$ the number of the alternative scenarios
will grow very quickly: This growth
is sampled in Appendix A.

\section{Systematics of models with clustered EPN limits\label{seduma}}

Our model-building strategy is based on the
partitioning of an $N$ by $N$ GAO
Hamiltonian
labeled by the boxed main diagonal
(cf. Eq.~(\ref{bosy})) into a $K-$plet
of the TAO components represented by the shorter,
centrally antisymmetric boxed
equidistant subsets
 \be
 \fbox{$(1-N_j)\,c_j,(3-N_j)\,c_j,\ldots,(N_j-3)\,c_j,(N_j-1)\,c_j$}\,.
 \label{hjugobossy}
 \ee
This makes every candidate for a solvable benchmark Hamiltonian
equal to a direct sum of
TAO building blocks.
At the first few dimensions $N$,
the systematic
constructive implementation
of such a recipe will be made more explicit in
what follows.

\subsection{The choice of $N=2$ and $N=3$: no anomalous degeneracies}

For our present GAO class of
$\lambda-$dependent Hamiltonians (\ref{pefull})
there exists strictly one,
unique EP2 limit
satisfying our restrictions at $N=2$, namely, the
matrix $H^{(2)}_{\rm (TAO)}(\lambda^{(EPN)})$
as displayed in Eq.~(\ref{epthisset}).
In our abbreviated notation such a matrix is characterized by the
boxed symbol
$\fbox{-1,1}$. In the notation of Appendix A
the number $a(N)$ of eligible
scenarios is one, $a(2)=1$. In the EPN limit, the
geometric multiplicity of the spectrum is $K=1$.

Similarly, at $N=3$ we have
$a(3)=1$ and
the unique $K=1$
limit $H^{(3)}_{\rm (TAO)}(\lambda^{(EP3)})$
represented by the boxed
symbol $\fbox{-2,0,2}$ and
by the matrix
displayed in Eq.~(\ref{epthisset}).

\subsection{The simplest anomalous case with $N=4$ and $K=2$}

Besides the trivial $K=1$ option
with symbol $\fbox{-3,-1,1,3}$, there exists strictly one other
possibility of decomposition at $N=4$, viz.,
  $$
 \fbox{-3,-1,1,3}=\fbox{-1,1} \oplus \fbox{-3,3}
\,,\ \ \ \ K=2 \,.
 $$
In the limit $\lambda \to \lambda^{(EP4)}$
this direct sum represents the seven-diagonal but very
sparse GAO matrix
\be
 H^{(4)}_{(K=2)}(\lambda^{(EP4)}) = \left [\begin {array}{rrrr}
  -3&0   &0  &3\\
 0&1   &-1  &0\\
  0&-1  &1 &0\\
 -3&0&0&3
 \end {array}\right ]\,.
 \ee
In the unitarity domain where $\lambda \neq \lambda^{(EP4)}$
the number of the eligible 
dynamical scenarios is two, $a(4)=2$,
one of them representing a nontrivial clusterization
with $K=2$.

\subsection{Two $K=2$ options at $N=5$}

At $N=5$ the number of scenarios is three, $a(5)=3$.
Besides the trivial case we have
the two $K=2$ decompositions
  $
 \fbox{-4,-2,0,2,4}=\fbox{-2,0,2} \oplus \fbox{-4,4}
 $ and
 $
 \fbox{-4,-2,0,2,4}=\fbox{-4,0,4} \oplus \fbox{-2,2} $,
leading to the two 
alternative, non-equivalent wave function clusterizations.
These two alternatives are 
represented by the two
respective EP5 limiting matrix Hamiltonians, viz., by the
nine-diagonal
 $$
  H^{(5)}_{(K=2,a)}(\lambda^{(EP5)})
 =\left [\begin {array}{rrrrr}
  -4&0&0&0&4
 \\{}0& -2&\sqrt{2}&0&0
 \\{}0&-\sqrt{2}&0&\sqrt{2}&0
 \\{}0&0&-\sqrt{2}&2&0
 \\{}-4&0&0&0& 4
 \end {array}\right ]\,
 $$
and/or by the pentadiagonal
 $$
  H^{(5)}_{(K=2,b)}(\lambda^{(EP5)})
 =\left [\begin {array}{crccc}
    -4&0&2\sqrt{2}&0&0
 \\{}0& -2&0&2&0
 \\{}-2\sqrt{2}&0&0&0&2\sqrt{2}
 \\{}0&-2&0&2&0
 \\{}0&0&-2\sqrt{2}&0& 4
 \end {array}\right ]\,.
 $$
The latter matrix
fits in the classification pattern as provided by
Lemma~\ref{lemma2} above.

\subsection{The first occurrence of $K=3$ at $N=6$}

Besides the trivial, non-degenerate EP6 limit
with $K=1$ we have to consider its anomalous descendants, viz., the
unique $K=2$ decomposition
  $
 \fbox{-5,-3,-1,1,3,5}=\fbox{-3,-1,1,3} \oplus \fbox{-5,5}\,
 $
and the
unique $K=3$ decomposition
  $
 \fbox{-5,-3,-1,1,3,5}= \fbox{-1,1}
  \oplus \fbox{-3,3} \oplus \fbox{-5,5}
 $.
In both of these cases
(listed in illustrative Table \ref{dowe} above)
the
direct-sum components of
$H^{(6)}_{(K=2,K=3)}(\lambda^{(EP6)})$
may be found displayed in Eq.~(\ref{epthisset}).
In the latter case, for example, the direct sum yields the matrix
 $$
  H^{(6)}_{(K=3)}(\lambda^{(EP6)})
 =\left [\begin {array}{rrrccc}
    -5&0&0&0&0&5
 \\{}0& -3&0&0&3&0
 \\{}0&0&-1&1&0&0
 \\{}0&0&-1&1&0&0
 \\{}0&-3&0&0&3&0
 \\{}-5&0&0&0& 0&5
 \end {array}\right ]\,.
 $$
The number of scenarios is $a(6)=3$.
The role
and consequences of small perturbations of the latter matrix
were
analyzed in \cite{anomalous}.

\subsection{Paradox of decrease of $a(N)$ between $N=7$ and $N=8$}

At $N=7$ the number of the eligible EP7 scenarios is $a(7)=6$
because
the usual trivial $K=1$ option can be accompanied by
the following quintuplet of anomalous EP7 direct sums,
  $$
 \fbox{-6,-4,-2,0,2,4,6}=
  \fbox{-4,-2,0,2,4} \oplus \fbox{-6,6}\,,
 \ \ \ \ \ K=2\,,
  $$
%and
  $$
 \fbox{-6,-4,-2,0,2,4,6}=\fbox{-2,0,2} \oplus \fbox{-4,4}
 \oplus \fbox{-6,6}\,,
 \ \ \ \ \ K=3\,,
 $$
%and
  $$
 \fbox{-6,-4,-2,0,2,4,6}=\fbox{-4,0,4} \oplus \fbox{-2,2}
 \oplus \fbox{-6,6}\,,
 \ \ \ \ \ K=3\,,
 $$
%and
  $$
 \fbox{-6,-4,-2,0,2,4,6}=
 \fbox{-4,0,4} \oplus \fbox{-6,-2,2,6}\,,
 \ \ \ \ \ K=2\,,
 $$
%and
  $$
 \fbox{-6,-4,-2,0,2,4,6}=\fbox{-6,0,6} \oplus \fbox{-2,2}
 \oplus \fbox{-4,4}\,,
 \ \ \ \ \ K=3\,.
 $$
In contrast, at
$N=8$
we have $a(8)=4$, i.e.,
only the triplet of the anomalous, $K>1$ direct sums becomes available,
viz.,
  $$
 \fbox{-7,-5,-3,-1,1,3,5,7}=\fbox{-5,-3,-1,1,3,5}
  \oplus \fbox{-7,7}\,,
 \ \ \ \ \ K=2\,,
 $$
  $$
 \fbox{-7,-5,-3,-1,1,3,5,7}=\fbox{-3,-1,1,3} \oplus \fbox{-5,5}
 \oplus \fbox{-7,7}\,,
 \ \ \ \ \ K=3\,,
 $$
  $$
 \fbox{-7,-5,-3,-1,1,3,5,7}= \fbox{-1,1} \oplus \fbox{-3,3}
  \oplus \fbox{-5,5}\,,
  \oplus \fbox{-7,7}\,,
 \ \ \ \ \ K=4\,.
    $$
The latter item is our first four-term direct-sum decomposition
example
representing a
fifteen-diagonal but very sparse matrix
$ H^{(8)}_{(K=4)}(\lambda^{(EP8)})$
with bi-diagonal structure.

\section{Discussion \label{speciae}}

%\subsection{Numerical challenges}

The field of study of
quantum phase transitions
and of the
role played by the Kato's exceptional points
may be characterized by a lasting conflict
between ambition and reality. The enthusiasm accompanying the
production of ideas on the side of theory ({\it pars pro toto\,} let us
mention several older
compact reviews of physics of non-Hermitian degeneracies
\cite{Berry,Heiss})
seems counterbalanced by the difficulties of the search
for EPN-related phase transitions in the laboratory \cite{Stefan}.

The concept of the exceptional-point
value $\lambda^{(EPN)}$ of a real parameter
$\lambda$ in a linear operator
$H(\lambda)$
proved, originally, useful
in mathematics \cite{book,Kato}.
Physics behind the EPNs
remained obscure.
The situation has changed after several authors
discovered
that the concept admits
applicability in multiple branches of quantum as well as non-quantum
physics \cite{Carl,Carlbook,Christodoulides}.
{\it Pars pro toto\,} let us mention that
in the subdomain of quantum physics the values of $\lambda^{(EPN)}$
acquired the status
of instants of an experimentally realizable
quantum phase transition \cite{Geyer,BB,denis}.
The
boundary-of-stability
role played by the values of $\lambda^{(EPN)}$
attracted, therefore, attention of experimentalists \cite{Heiss,Joglekar}
as well as of theoreticians \cite{Trefethen,Dorey,Fabio}.

During our study of the problem we felt challenged by both of these
aspects of the phenomenon.
On the side of experiment we were
impressed, first of all, by an intimate connection
between the mathematics of EPNs and the very concrete physics
of quantum phase transitions \cite{catastb}. On the side of theory
we felt motivated by the existence of its
two mutually interrelated aspects.
The first one
was pragmatic: the
description
of the
processes of the loss of the observability
seems to be hardly feasible by the conventional numerical means
\cite{bhgt6}. At the same time, the perturbation-approximation
techniques appeared to be applicable after minor amendments
\cite{without}.
The second attractive aspect of the theory was conceptual and deeper:
the non-Hermitian degeneracy of
an $N-$plet of the stable bound states with $N>2$
only became tractable as an admissible process
in the framework of
quasi-Hermitian formulation of quantum mechanics \cite{Geyer}
(at present, people more often use the term
${\cal PT}-$symmetric theory,
cf. reviews
\cite{SIGMA,ali,Carlbook,Christodoulides}).

In the latter framework
one encounters
challenges and apparent contradictions
reflecting the
unexpected
peaceful coexistence of the
{non-Hermitian} EPN-related degeneracy
of spectra (which are,
by assumption \cite{Carl,BB}, {real})
with the {unitarity} of evolution.
Another puzzle may be seen
in a rather vague correspondence between
some of the EPN properties and
the structure of the operators, etc.
In our present study we felt guided
by the
contrast between the not too surprising
numerical ill-conditioning \cite{Wilkinson}
and the
real-matrix
nature of the exactly solvable TAO models
of Refs.~\cite{maximal,tridiagonal}.
The over-restrictive
form of these models
was reclassified
as inessential.
We accepted the conjecture
of
correlation between the tridiagonality of
Hamiltonians
and a {triviality} of the geometric
multiplicities \cite{anomalous}. Keeping the warning in mind
we
constructed the universal GAO $K>1$ models
via the TAO-direct-sum ansatzs.

%\subsection{Phenomenological challenges}

\subsection{Unitary vs. non-unitary systems}

The theoretical background of our present complete
menu of non-numerically tractable
phase-transition scenarios
lies in the consistent
theoretical compatibility of the non-Hermiticity of
the Hamiltonian
with the unitarity of the quantum evolution
in the
quasi-Hermitian Schr\"{o}dinger
picture~\cite{Geyer,Carl,ali,MZbook}.
In order to avoid misunderstandings
we must immediately add that in many
experiment-oriented
descriptions, the
quantum-system transmutations
mediated by the EPNs
{\it alias\,} non-Hermitian degeneracies \cite{Berry}
are very often non-unitary.
In the extensive related literature  \cite{Nimrod}
the scope of the theory is very broad.
In the
Feshbach's
open-system spirit \cite{Feshbach},
people work with the non-Hermitian
effective Hamiltonians $H_{\rm eff}$
with spectra which are complex.
Still, the quantum systems in question are realistic and
their
analysis profits from sharing the
mathematical know-how with the
${\cal PT}-$symmetric theories
of the closed, unitary system.

From the historical perspective,
the open-system
philosophy
was always dominant. For a broader audience
this dominance
has only been shattered, in 1998, by Bender with Boettcher \cite{BB}.
These authors proposed that
at least some of the processes of the
quantum phase transitions
might find a more natural
description and explanation in the alternative,
closed-system theoretical
framework (see reviews
\cite{Carl,ali,book}).

The Bender- and
Boettcher-inspired change of the paradigm had two roots,
both of them related to the problem of quantum phase transitions.
One is that even in many closed
quantum systems the unitary evolution
can be controlled by the Hamiltonian
(with real spectrum)
which is non-Hermitian
(i.e., admitting the EPNs).
Although such a conjecture might sound like a paradox,
the Bender's and Boettcher's
secret trick was that the latter Hamiltonian
can be,
via an
appropriate amendment of the Hilbert space
of states,  Hermitized,
fitting all of the standard postulates
of textbooks
(see, e.g., the older review paper \cite{Geyer}
for some basic mathematical details).

The second root of the change of the paradigm
concerns the
quantum phase transitions more immediately,
opening their innovative treatment
via specific examples.
In \cite{BB}
the innovation was
sampled
by the spontaneous breakdown of
parity-time (i.e., ${\cal PT}$) symmetry.
The authors related
the
collapse
of the system
to the coincidence of the parameter in
$H(\lambda)$ with its
EPN
value
$\lambda^{(EPN)}$.
In the
phase-transition limit
$\lambda \to \lambda^{(EPN)}$ they, indeed, encountered
a genuine qualitative novelty.
In their non-Hermitian
local-interaction
models the degeneracy was mostly followed by an abrupt
complexification, i.e., by
a sudden loss of the
observability~\cite{BB}. Their EPN-related
``quantum catastrophic'' behavior was realized as a
merger
of
bound-state energies (\ref{siesta}).

The key challenge was to show,
for every preselected non-Hermitian Hamiltonian, that
in the real-spectrum regime with
$\lambda < \lambda^{(EPN)}$
the system remains unitary.

\subsection{Correct inner products \label{statartbe}}

For a preselected non-Hermitian quantum
Hamiltonian with real spectrum
a specification of its correct probabilistic
closed-system interpretation
can be a straightforward linear-algebraic procedure,
especially for the most elementary TAO
tridiagonal Hamiltonian matrices (\ref{pentoy})
(cf.
the detailed and constructive
discussion of this point in
paper \cite{recurrent}).
A firm theoretical ground
of such a procedure (explained, e.g., in
review \cite{ali})
is to be sought in an
amendment  of
the conventional Hilbert space.
For the sake of definiteness, we may
denote
the amended,
correct Hilbert space
by dedicated symbol ${\cal H}$,
with the other, manifestly unphysical
but mathematically friendlier
Hilbert space denoted by
a different symbol, say,  ${\cal K}$.

In this notation
the spectrum of any candidate
$H(\lambda)$ for an
observable Hamiltonian must be kept real. Although such an
operator must be selfadjoint  in ${\cal H}$
(i.e.,
in the physical Hilbert space,
obedient to the Stone theorem \cite{Stone}),
it will be, in general,
non-Hermitian in ${\cal K}$.
The key purpose of such a duplicity
(often attributed to Dyson \cite{Dyson})
is that
via transition to a user-friendlier space ${\cal K}$,
one achieves a decisive
simplification of all calculations.
In parallel, in the language of physics one
transfers the responsibility for the
unitarity of the system from
the $\lambda-$dependence of
the Hamiltonian
to the more flexible $\lambda-$dependence of
the Hilbert space ${\cal H}={\cal H}(\lambda)$.
Such a ``double picture'' of evolution
offers a highly welcome physical
model-building freedom while still guaranteeing that
the evolution generated by  ${H}(\lambda)$
remains unitary in ${\cal H}(\lambda)$.

In the majority of applications
(including the one used in the present paper)
the original Dyson's flowchart of the theory
is inverted. The model-building process
is initiated by the choice of a non-Hermitian ${H}(\lambda)$
acting in a $\lambda-$independent
Hilbert space ${\cal K}$ endowed
with a conventional
inner product $\langle \psi_1|\psi_2\kt_{\cal K}$
which is unphysical but user-friendly.
Naturally, what is then needed is a reconstruction of
the ``missing'' Hilbert space ${\cal H}(\lambda)$ \cite{ali}.

In a way described in \cite{Geyer} the construction
of ${\cal H}(\lambda)$ becomes significantly facilitated
when the candidate for the Hamiltonian is finite-dimensional,
${H}(\lambda)={H}^{(N)}(\lambda)$.
Then, there will exist multiple Hermitian and positive-definite
matrices $\Theta=\Theta^{(N)}(\lambda)$
which satisfy the $N$ by $N$ matrix equation
 $$
 [{H}^{(N)}(\lambda)]^\dagger\,\Theta^{(N)}(\lambda)
 =\Theta^{(N)}(\lambda)\,{H}^{(N)}(\lambda)\,.
 $$
In terms of any one of these matrices
one can, subsequently, define the space ${\cal H}(\lambda)$
via the mere redefinition of the inner product in  ${\cal K}$,
 \be
 \langle \psi_1|\psi_2\kt_{\cal H}
 =\langle \psi_1|  \Theta|\psi_2\kt_{\cal K}\,.
 \label{innp}
  \ee
The
differences between the two alternative representation
spaces is in fact reduced to the mere non-equivalence of
the respective inner products.
Due to such an elementary
mathematical correspondence the standard textbook description of
the unitary evolution dynamics defined in the difficult
Hilbert space ${\cal H}$
finds a simplification
in which all of the
necessary calculations and predictions
are assumed to be made
in the much simpler representation space ${\cal K}$
(i.e., in our present paper, in the
most elementary Euclidean real vector space
${\cal K}=\mathbb{R}^{N}$).

\subsection{The corridors of unitarity\label{tridva}}

Whenever one keeps the evolution
unitary, the values of $\lambda^{(EPN)}$
mark the
points of the loss of the observability
of the system \cite{catast,sdenisem}.
Our present
hierarchy of specific GAO models may be
treated as certain
exactly solvable quantum analogue of the Thom's  typology of classical
catastrophes \cite{Zeeman},
with potential applicability to closed as well as open systems.

In our present paper we were exclusively
interested in the former type of applications.
We knew that
the dynamics
of any unitary quantum system
(i.e., typically, its stability with respect to small perturbations)
is strongly influenced by the
EPNs. We should only add that an extreme care must be paid
to
the Stone theorem \cite{Stone}
requiring the
Hermiticity of
$H(\lambda)$
in the related physical Hilbert space ${\cal H}$.
A Hermitization of the Hamiltonian is needed
\cite{Geyer}.
Such a process involves a
reconstruction of an
appropriate amended inner product
in the conventional but
unphysical Hilbert space ${\cal K}$.
Interested readers may find
one of the rare samples of
such a reconstruction of the whole menu of ${\cal H}$s
in \cite{scirep}.

In the realistic models one often
encounters a paradox that the construction of
the correct, amended inner product
may happen to be prohibitively complicated, i.e.,
from the pragmatic point of view,
inaccessible.
This is the reason why people often
postpone the problem and use,
temporarily, a simplified inner product.
For our present, user-friendly, matrix-represented
GAO Hamiltonians of closed systems with $N < \infty$
such a purely technical obstacle does not occur.
The
reconstruction  of ${\cal H}$
would be a routine
application of linear algebra.
Reclassifying the real GAO matrices which are
non-Hermitian in our auxiliary space
${\cal K}=\mathbb{R}^N$ (that's why
we write $H \neq H^\dagger$)
into operators which are, by construction, Hermitian in ${\cal H}$
(in \cite{MZbook}, for example, we wrote $H=H^\ddagger$).

An unusual
property
of the physical, amended Hilbert space is that it is
Hamiltonian- and
$\lambda-$dependent, ${\cal H}={\cal H}(\lambda)$.
In the present phase-transition context the
Hamiltonian $H(\lambda)$ itself may be interpreted as
Hermitian just for
``admissible'', unitarity-compatible
$\lambda$s forming a domain ${\cal D}$.
The system is able to reach, via unitary evolution,
the instant of the EPN-related quantum phase transition
if and only if the overlaps of ${\cal D}$
with the arbitrarily small vicinities
of $\lambda^{(EPN)}$ remain all non-empty
forming a ``corridor of access''
${\cal D}^{(EPN)}$  \cite{corridors}.

In paper \cite{corridors} it has been shown
that
the corridors ${\cal D}^{(EPN)}$
connecting,
in the space of matrix elements,
the EPN extremes
with the points in a deep interior of  ${\cal D}$
do always exist. For the
TAO matrices (\ref{epthisset}), in particular,
the shape of these corridors
has been found, in \cite{tridiagonal}, sharply spiked.
For the present broader class of the $K>1$ models
the existence of the analogous corridors
of the unitary access to EPNs
has been conjectured in \cite{degdeg}.

\section{Summary\label{summary}}

The phenomenological usefulness
of the
TAO models of
Refs.~\cite{passage,maximal,tridiagonal}
re-appears, unexpectedly,
also in the $K>1$ GAO-based phase-transition context.
We showed that in the amended theory
the TAO matrices
may be assigned the role of
building blocks, and that such a trick implies,
as one of its its byproducts, the
exact solvability of the
resulting benchmark GAO quantum systems
at any $K>1$ and $\lambda \in (0,\lambda^{(EPN)})$.
Along these lines,
a universal mathematical classification as well as
a richer structure of
predictions of the measurable phenomena is achieved.

A menu of eligible benchmark EPN-supporting specific models
is proposed and described as
controlled by alternative multidiagonal $N$ by $N$ matrix
realizations
of the direct-sum
anharmonic-oscillator-type Hamiltonians.
On methodical level the
phenomenological non-equivalence
of these partially
$K-$related decompositions
of the Hamiltonians
is to be stressed.

Via our systematic explicit
enumeration of the most elementary special cases
we emphasized, first of all,
the importance of the
proper treatment of the geometric multiplicity $K$
of the $N-$plets of energy
levels in the EPN-related phase-transition dynamical regime.
We showed that
in a pre-critical stage of the transition
the nontrivial multiplicities $K>1$
just reflect the existence of the phenomenon of
a $K-$tuple clusterization of the wave functions
of bound states.
This may be of interest in experiments in which
the processes of the loss of observability
are currently being studied, mostly
due to the existing technical limitations, just at the
small $N$ and trivial $K=1$.
Our present results may offer a motivation
for performing some extended and subtler analyses
or simulations of the clusterized $K>1$  processes in the laboratory.

\subsection*{Acknowledgments}

The author acknowledges the financial support from the
Excellence project P\v{r}F UHK 2021.

\newpage

%\end{verbatim}

\section*{Appendix A: Direct-sum-decomposition partitionings ${\cal R}(N)$}

In the present GAO-based phenomenological models
numbered by the Hamiltonian-matrix dimensions $N=(1), 2,3,\ldots$,
the counts of the eligible non-equivalent
EPN-related dynamical scenarios (i.e.,
direct-sum decompositions (\ref{limitsN}) and  (\ref{limitsNK})
as sampled, in Table \ref{dowe}, at $N=6$)
form a sequence
 \be
 a(N) = (0), 1, 1, 2, 3, 3, 6, 4, 11, 6, 17, 7, 32, 8, 47, 13, 66,
  \ldots\,.
 \ee
The evaluation of the sequence
is important and useful for at least two reasons. First,
beyond the smallest $N$,
it enables us to check the completeness
of the EPN-related
dynamical alternatives.
Second,
the asymptotically exponential growth of the sequence
indicates that at the larger $N$s, the
menus of the EPN-supporting toy models
numbered by partitionings ${\cal R}(N)$
will be dominated
by the anomalous, multidiagonal $K>1$ Hamiltonians.

Besides that, the properties of the sequence
are of an independent mathematical interest.
First of all we notice that our sequence
seems composed of the two apparently simpler,
monotonously increasing integer subsequences.
They have to be discussed separately.

\subsection*{A.1. Subsequence of $a(N)$ with even $N=2J$, $J=1,2,\ldots$.}

The values
 \be
 b(J)=a(2J) =  1, 2, 3, 4, 6, 7, 8, 13, 14, 15, 25, 26, 33, 50, \ldots
 \ee
of the even-dimension (sub)sequence
may be generated by the algorithm described in \cite{oeiseven} and
carrying the identification code number
A336739.
Recalling this source let us summarize a few key
mathematical features of the sequence.

\begin{defn}
\label{ace}
The quantity b(n) is the number of
decompositions of {B}(n,1) into disjoint unions of {B}(j,k)
where {B}(j,k) is the set of numbers \{(2\,i-1)\,(2\,k-1), 1
$\leq$ i $\leq$ j \}.
\end{defn}

%DATA
%1, 2, 3, 4, 6, 7, 8, 13, 14, 15, 25, 26, 33, 50, 51, 52,
% 95, 152, 153, 295, 296, 297, 542, 543, 672, 1329, 1330, 2055,
%4093, 4094, 4095, 6992, 10697, 10698, 21375, 21376, 21377, 39051,
%55948, 55949, 86454, 86455, 130396, 260765, 260766, 365839,
%731649, 1100442

 \noindent
It may be instructive to display a few examples:

{B}(n,1) are the sets \{1\}, \{1,3\}, \{1,3,5\}, \{1,3,5,7\}, ...,

{B}(n,2) are the sets \{3\}, \{3,9\}, \{3,9,15\}, \{3,9,15,21\}, ...,

{B}(n,3) are the sets \{5\}, \{5,15\}, \{5,15,25\}, \{5,15,25,35\}, ...,

 \noindent
etc. There are two decompositions of {B}(2,1) = \{1,3\}, viz.,
trivial {B}(2,1) and nontrivial {B}(1,1) + {B}(1,2) = \{1\} + \{3\}.
Similarly, the complete list of the $a(5) = 6$
decompositions of \{1,3,5,7,9\}
is as follows:

  \{\{1,3,5,7,9\}\},

  \{\{1,3,5,7\}, \{9\}\},

  \{\{1,3,5\}, \{7\}, \{9\}\},

  \{\{1,3\}, \{5\}, \{7\}, \{9\}\},

  \{\{1\}, \{3\}, \{5\}, \{7\}, \{9\}\},

  \{\{3,9\}, \{1\}, \{5\}, \{7\}\}.

 \noindent
We should add that
the
notation used in definition \ref{ace} of quantities  {B}(j,k) is
mathematically optimal.
For the purposes of our present paper, nevertheless, it is
necessary to recall the equivalence of every {B}(j,k)
to one of the present boxed symbols. For example,
in place of
 {B}(3,1) = \{1,3,5\} we should write ${\cal B}$[3,1] =
  \fbox{-5,-3,-1,1,3,5}, etc.
Definition \ref{ace} can be modified as follows.

\begin{defn}
\label{bece}
The quantity b(n) is the number of different decompositions
of ${\cal B}$[n,1] into
unions of ${\cal B}$[j,k] where ${\cal B}$[J,K] is defined as
the boxed symbol \\
. \hspace{2cm}
 \fbox{(2K-1)(1-2J),(2K-1)(3-2J),(2K-1)(5-2J), ... ,(2K-1)(2J-1)}.
\end{defn}

\subsection*{A.2.  Subsequence of $a(N)$ with odd $N=2J+1$, $J=1,2,\ldots$.}

%, under number, A335631,

The values of the subsequence
 \be
 c(J)=a(2J+1) =   1, 3, 6, 11, 17, 32, 47, 66, 105, 162, 198, 376,   \ldots
 \ee
may be found discussed
 in \cite{oeisodd}.
 Using this source
let us summarize a few key aspects of this sequence
which carries the identification number A335631.

\begin{defn}
The quantity c(n) is the number of decompositions of {C}(n,1) into
disjoint unions of {C}(j,k) and G(q,r) where {C}(j,k) is the set of
numbers \{i\,k, 0 $\leq$ i  $\leq$ j \} and where G(q,r) is the
set of numbers \{(2\,p-1)\,r, 1  $\leq$ p  $\leq$ q \}.
\end{defn}

 \noindent
In a more explicit manner let us point out that

{C}(n,1) are the sets \{0,1\}, \{0,1,2\}, \{0,1,2,3\}, \{0,1,2,3,4\}, ...,

{C}(n,2) are the sets \{0,2\}, \{0,2,4\}, \{0,2,4,6\}, \{0,2,4,6,8\}, ...,

{C}(n,3) are the sets \{0,3\}, \{0,3,6\}, \{0,3,6,9\}, \{0,3,6,9,12\}, ...,

 \noindent
etc., and that

G(n,1) are the sets \{1\}, \{1,3\}, \{1,3,5\}, \{1,3,5,7\}, ...,

G(n,2) are the sets \{2\}, \{2,6\}, \{2,6,10\}, \{2,6,10,14\}, ...,

G(n,3) are the sets \{3\}, \{3,9\}, \{3,9,15\}, \{3,9,15,21\}, ...,

 \noindent
etc. We can say that
$a(2) = 3$ because the decompositions of {C}(2,1) = \{0,1,2\}
involve not only the trivial copy {C}(2,1) but also the nontrivial
formulae
{C}(1,2) + G(1,1) = \{0,2\} + \{1\} and
{C}(1,1) + G(1,2) = \{0,1\} + \{2\}.
Similarly: why $a(3) = 6$?
Because the decompositions of \{0,1,2,3\} are as follows:

  \{\{0,1,2,3\}\},

  \{\{0,1,2\}, \{3\}\},

  \{\{0,1\}, \{2\}, \{3\}\},

  \{\{0,2\}, \{1,3\}\},

  \{\{0,2\}, \{1\}, \{3\}\},

  \{\{0,3\}, \{1\}, \{2\}\}.

 \noindent
The one-to-one correspondence and the translation of this notation
to our present boxed-symbol language is again obvious, fully
analogous to
the one described in the preceding paragraph.

%\newpage


\begin{thebibliography}{10}


\bibitem{Messiah}
A. Messiah, Quantum Mechanics.
North Holland, Amsterdam, 1961.

 \bibitem{Landau}
 M. Znojil,
 %Hermitian--to--quasi-Hermitian quantum phase transitions .
 Phys. Rev. A 97 (2018) 042117.
 % (10 pages; published 20 April)
% DOI: 10.1103/PhysRevA.97.042117

\bibitem{Greiner}
 % title=
 W. Greiner,
 {Relativistic Quantum Mechanics: Wave Equations}.
 Springer, Berlin, 2012.
%  author={Greiner, W.},
%  isbn={9783642880827},
%  url={https://books.google.cz/books?id=a6\_rCAAAQBAJ},
%  year={2012},
%  publisher={Springer Berlin Heidelberg}


\bibitem{aliKG}
A. Mostafazadeh, {Class. Quantum Grav.}  {20} (2003) 155.




\bibitem{Geyer}
F. G. Scholtz, H. B. Geyer and F. J. W. Hahne,
% Quasi-Hermitian Operators in Quantum
%Mechanics and the Variational Principle,
Ann. Phys. (NY) 213
(1992) 74.
%-101,.


\bibitem{Carl}
C. M. Bender,
% C M 2007 {\it
Rep. Prog. Phys. 70 (2007) 947.
% (hep-th/0703096).


\bibitem{SIGMA}
M. Znojil,  {Symm. Integ. Geom. Methods and Appl.}
{5} (2009) 001.
%(e-print overlay: arXiv:0901.0700).


\bibitem{ali}
A. Mostafazadeh,
% A 2010
 %Pseudo-Hermitian Quantum Mechanics,
%arXiv:0810.5643 (October 2008).
Int. J. Geom. Meth. Mod. Phys. 7
%-1306
(2010) 1191.



\bibitem{book}
F. Bagarello, J.-P. Gazeau, F. Szafraniec and M. Znojil, Eds.,
Non-Selfadjoint Operators in Quantum Physics: Mathematical Aspects.
Wiley, Hoboken, 2015.



\bibitem{corridors}
M. Znojil,
 %Unitarity corridors to exceptional points .
 Phys. Rev. A 100 (2019) 032124.
% DOI: 10.1103/PhysRevA.100.032124
% (arXiv:1909.02035).
%
%\bibitem{corridorsb}
%M. Znojil,
%%Unitarity corridors to exceptional points.
%Phys. Rev. A 100, 032124 (2019).
%%
%%in print (arXiv:1909.02035).


\bibitem{Kato}
T. Kato,
 Perturbation Theory for Linear Operators.
 Springer Verlag, Berlin, 1966.



\bibitem{passage}
M. Znojil,
Proc. Roy. Soc. A: Math. Phys. Eng. Sci. 476 (2020) 20190831.
%Miloslav Znojil,
% Passage through exceptional point: Case study.
%Proceedings of the Royal Society A:Mathematical, Physical
%\& Engineering Sciences476 (2020) 20190831.
%DOI:10.1098/rspa.2019.0831, 15 pages
% (arXiv:2003.05876)



\bibitem{anomalous}
M. Znojil and D. I. Borisov,
%Miloslav Znojil and Denis I. Borisov,
% Anomalous mechanisms of the loss of observability
% in non-Hermitian quantum models .
 Nucl. Phys. B 957 (2020) 115064.
%OPEN ACCESS  DOI: 10.1016/j.nuclphysb.2020.115064
% (arXiv:2005.13069)


\bibitem{maximal}
M. Znojil,
J. Phys. A: Math. Theor. 40
(2007) 4863.

\bibitem{tridiagonal}
M. Znojil,
%nojil M. Tridiagonal PT-symmetric N by N Hamiltonians and a
%fine-tuning of their observability domains in the strongly
%non-Hermitian regime. J Phys A Math Theor 2007;40:13131–13148;
J. Phys. A: Math. Theor. 40
% - 13148
(2007) 13131.



\bibitem{BM}
%"Nonperturbative Calculation of Symmetry Breaking in Quantum Field Theory"
C. M. Bender and K. A. Milton,
Phys. Rev. D 55 (1997) R3255;

%\bibitem{stoplight}
T. Goldzak, A. A. Mailybaev and N. Moiseyev,
% Light stops at exceptional
%points.
Phys. Rev. Lett. 120 (2018) 013901.


 \bibitem{A002865}
The~On-Line Encyclopedia of Integer Sequences (OEIS), item A002865,
%Number of partitions of n that do not contain 1 as a part.
http://oeis.org/A002865/ (accessed January 31st, 2021).
%The On-Line Encyclopedia of Integer Sequences® (OEIS®)
% Number of partitions of n into >= 2 parts
%and with minimum part >= 2.
%
%\bibitem{A000041}
%http://oeis.org/A000041/ (accessed Nov 6th, 2019).
%


\bibitem{degdeg}
M. Znojil,
% Perturbation theory near degenerate exceptional points.
 Symmetry 12 (2020) 1309.
 %section:
% Physics and Symmetry
%OPEN ACCESS
%DOI: 10.3390/sym12081309.
% (special issue  Symmetries in Quantum Mechanics and
% Statistical Physics, G. Junker, Ed.)
% (arXiv:2008.00479)


\bibitem{Carlbook}
C. M. Bender, Ed., PT Symmetry in Quantum and Classical Physics.
%with contributions from
%P. E. Dorey, C. Dunning, A. Fring,
%D. W. Hook, H. F. Jones, S. Kuzhel, G. Levai, and R. Tateo
World Scientific, Singapore, 2018.
%C. M. Bender, PT Symmetry in Quantum and Classical Physics (World
%Scientific, Singapore, 2018), with contributions from P. E. Dorey,
%C. Dunning, A. Fring, D. W. Hook, H. F. Jones, S. Kuzhel, G. Levai,
%and R. Tateo.; edited book with
%https://doi.org/10.1142/q0178 | January 2019
%Pages: 468
%By (author): C. M. Bender (Washington University in St. Louis, USA)
%C
%contributions by P. E. Dorey, C. Dunning, A. Fring, D. W. Hook, H.
%F. Jones, S. Kuzhel, G. L\'{e}vai and R. Tateo.

\bibitem{Christodoulides}
D. Christodoulides and J.-K. Yang, Eds.,
Parity-time Symmetry and Its Applications.
%Hardback Springer Tracts in Modern Physics English
%Edited by  Demetrios Christodoulides ,
%Edited by  Jianke Yang
%
Springer Verlag, Singapore, 2018.


\bibitem{admissible}
M. Znojil, Phys. Rev. A 97 (2018) 032114.
%Admissible perturbations and false instabilities



\bibitem{Ruzicka}
M. Znojil and F. R\r{u}\v{z}i\v{c}ka,
% Nonlinearity of perturbations in PT-symmetric quantum mechanics.
%
J. Phys. Conf. Ser. 1194 (2019) 012120.
%Miloslav Znojil and Frantisek Ruzicka,
% Nonlinearity of perturbations in PT-symmetric quantum mechanics.
% J. Phys. Conf. Ser. 1194 (2019) 012120
% proceedings of the 32nd International Colloquium on Group
%Theoretical Methods in Physics • GROUP 32 •
%(Prague, Czechia, July 9 - 13, 2018).
%Ed. by C. Burdík,O Navrátil and S Pošta
% DOI: 10.1088/1742-6596/1194/1/012120
% (open access).

\bibitem{without}
M. Znojil,
% Unitary unfoldings of Bose-Hubbard exceptional point with and
% withoutparticle number conservation.
Proc. Roy. Soc. A: Math., Phys.
\& Eng. Sci. A 476 (2020) 20200292.
%
% . .
%DOI:10.1098/rspa.2020.0292, 20 pages
% (arXiv:2008.12844)

\bibitem{Wilkinson}
J. H. Wilkinson, The Algebraic Eigenvalue Problem.
Oxford University Press, Oxford, 1965.
%
%\bibitem{Wilkinson}
%Oxford Univ press 1965
%The Algebraic Eigenvalue Problem (Numerical Mathematics and Scientific Computation) Revised ed. Edition

\bibitem{bhgt6}
M. Znojil,
 %Generalized Bose-Hubbard Hamiltonians exhibiting a complete
% non-Hermitian degeneracy .
 Ann. Phys. (NY) 405 (2019) 325.
 % - 339
% DOI: 10.1016/j.aop.2019.03.022 ,
% (arXiv:1902.10942).

\bibitem{Berry}
%Physics of Nonhermitian Degeneracies
M. V. Berry,
Czechosl. J. Phys. 54 (2004) 1039.
% - 1047
%Berry, M. Physics of Nonhermitian Degeneracies.
%Czechoslovak Journal of Physics 54, 1039–1047 (2004).

\bibitem{Heiss}
W. D. Heiss,
J. Phys. A: Math. Theor. 45 (2012) 444016.

\bibitem{Stefan}
%Non-Hermitian physics and PT symmetry
R. El-Ganainy, K. G. Makris, M. Khajavikhan, 
Z. H. Musslimani, S. Rotter and D. N. Christodoulides,
Nature Phys. 14 (2018) 11;
%-19

M.-A. Miri and A. Alu,
Science 363 (2019) eaar7709.
%REVIEW
%Exceptional points in optics and photonics
%View ORCID Profile
%Mohammad-Ali Miri1,2,3, View ORCID Profile
%Andrea Alù4,3,5,1,*
% See all authors and affiliations
%
%Science  04 Jan 2019:
%Vol. 363, Issue 6422, eaar7709
%DOI: 10.1126/science.aar7709



\bibitem{BB}
C. M. Bender and S. Boettcher, Phys. Rev. Lett. 80
%-5246
(1998) 5243.



\bibitem{denis}
D. I. Borisov, Acta Polytech. 54  (2014) 93;
%%(arXiv:1401.6316).


%\bibitem{bhsmall}
M. Znojil and D. I. Borisov,
%"Two patterns of PTsymmetry
%breakdown in a non-numerical six-state simulation."
%
Ann. Phys. (NY) 394 (2018) 40;
% - 49, doi: 10.1016/j.aop.2018.04.023 (arXiv:1804.07324)

%Complex symmetric Hamiltonians and
%exceptional points of order four and five
%Miloslav
M. Znojil,
    Phys. Rev. A 98 (2018) 032109;
%DOI:   10.1103/PhysRevA.98.032109
%Cite as:   arXiv:1808.07472 [quant-ph]


%\bibitem{arbiorder}
S.-B. Wang, B. Hou, W.-X. Lu, Y.-T. Chen et al,
%Z.-Q. Zhang and C.-T. Chan,
%Arbitrary order exceptional point induced %by photonic
%spin–orbit interaction in %coupled resonators
%Shubo Wang, Bo Hou, Weixin Lu, Yuntian %Chen, Z. Q. Zhang \& C. T. Chan
Nature Comm. 10 (2019)
%, Article number: unications
832.


\bibitem{Joglekar}
%Robust and fragile PT-symmetric phases in a tight-binding chain
Y. N. Joglekar, D. Scott, M. Babbey and A. Saxena,
Phys. Rev. A 82 (2010) 030103(R);

%Observation of parity-time symmetry breaking transitions in
%a dissipative Floquet system of ultracold atoms(1), 1-7
J. Li, A. K. Harter, J. Liu, L. de Melo et al,
%Y. N. Joglekar and L. Luo,
Nature Comm. 10  (2019) 1;

%Conserved quantities in parity-time symmetric systems
Z. Bian, L. Xiao, K. Wang, X. Zhan,
F. A. Onanga, F.  R\r{u}\v{z}i\v{c}ka,
 W. Yi, Y. N. Joglekar and P. Xue,
Phys. Rev. Research 2 (2020) 022039.
%/Onanga,F.A./Ruzicka,F./Yi,W./Joglekar,Y.N./Xue,P./



\bibitem{Trefethen}
L. N. Trefethen and M. Embree, Spectra and Pseudospectra. Princeton
University Press, Princeton, 2005.


\bibitem{Dorey}
P. Dorey, C. Dunning and R. Tateo,
J. Phys. A: Math.
Gen. 34 (2001) L391;
%Supersymmetry and the spontaneous breakdown of PT symmetry

C. M. Bender, D. C. Brody and H. F. Jones,
Phys. Rev. Lett. 89 (2002) 270401.


\bibitem{Fabio}
%\bibitem{Smilga}
A. V. Smilga,
%A.V. Cryptogauge symmetry and cryptoghosts for
%crypto-Hermitian Hamiltonians.
J. Phys. A: Math. Theor.  41 (2008)
244026;

F. Bagarello and M. Znojil,
% Non linear pseudo-bosons
%versus hidden Hermiticity. II: The case
%of unbounded operators,
J. Phys. A: Math. Theor. 45 (2012)  115311;

% F. Bagarello, From self-adjoint to non self-adjoint
% harmonic oscillators: physical consequences
%and mathematical pitfalls, Phys. Rev. A, 88, 032120 (2013)

%F. Bagarello, A. Fring A non self-adjoint model
%on a two dimensional noncommutative
%space with unbound metric, Phys. Rev. A, 88, 042119 (2013)

F. Bagarello, A. Inoue and C Trapani, %Non-self-adjoint
% hamiltonians defined by Riesz bases,
J.
Math. Phys. 55 (2014) 033501.

\bibitem{catastb}
K. Ding, Z.-Q. Zhang and C.-T. Chan, Phys. Rev. B 92 (2015) 235310.
%
%
%
%Non-Hermitian systems ... can exhibit a phase transition when the
%degree of non-Hermiticity is increased.
%
%one-dimensional photonic crystals
%
%two types of ... transitions are found. One is that a PT-broken
%phase can reenter into a PT-exact phase at a higher degree of
%non-Hermiticity. The other is that two EPs ...  can coalesce ... and
%create a singularity of higher order.
%
%
%Coalescence of exceptional points and phase diagrams for
%one-dimensional PT-symmetric photonic crystals By: Ding, Kun; Zhang,
%Z. Q.; Chan, C. T. PHYSICAL REVIEW B  Volume: 92   Issue: 23
%Article Number: 235310   Published: DEC 31 2015
%
%R. El-Ganainy, K. G. Makris,
%M. Khajavikhan, Z. H. Musslimani, S. Rotter
%and D. N. Christodoulides, Nature Phys. 14 (2018) 11.
%%Volume: 14   Issue: 1   Pages: 11-19   Published: JAN 2018
%%by the late J. H. Wilkinson (

\bibitem{MZbook}
M. Znojil, Ref.~\cite{book}, pp. 7--58.

\bibitem{Nimrod}
N. Moiseyev, Non-Hermitian Quantum Mechanics. CUP, Cambridge, 2011.

\bibitem{Feshbach}
H. Feshbach,
%Unified theory of nuclear reactions.
Ann. Phys. (NY)  5 (1958) 357.
%-390.

\bibitem{recurrent}
M. Znojil,
%"Quantum inner-product metrics via recurrent solution
%of Dieudonne equation."
J. Phys. A: Math. Theor. 45 (2012) 085302.
%http://dx.doi.org/10.1088/1751-8113/45/8/085302; arXiv: 1201.2263


\bibitem{Stone}
M. H. Stone,
 %M.H.
%On one-parameter unitary groups in Hilbert Space.
%\emph
Ann. Math. 33 (1932) 643.
% - 648.


\bibitem{Dyson}
F. J. Dyson,
%General Theory of Spin-Wave Interactions,
%Phys. Rev. 102
%%-1230
%(1956) 1217.
%Thermodynamic Behavior of an Ideal Ferromagnet
%Freeman J. Dyson
Phys. Rev. 102 (1956) 1230.
% – Published 1 June


\bibitem{catast}
M. Znojil, 
%"Quantum catastrophes: a case study."
J.
Phys. A: Math. Theor. 45 (2012) 444036;
%doi:10.1088/1751-8113/45/44/444036;
%arXiv: 1206.6000


%Non-Hermitian robust edge
%states in one dimension:
%Anomalous localization and
%eigenspace condensation at exceptional points
V. M. Martinez Alvarez, J. E. Barrios Vargas
 and L. E. F. Foa Torres,
Phys. Rev. B 97
 %– Published 7 March
 (2018) 121401(R).
%
% a peculiar type of non-Hermitian degeneracy where a macroscopic
% fraction of the states coalesce
%
%
%


\bibitem{sdenisem}
G. L\'{e}vai, F.  R\r{u}\v{z}i\v{c}ka and M. Znojil, 
%Three solvable
%matrix models of a quantum catastrophe.
Int. J. Theor. Phys. 53 (2014)
2875;
% - 2890 http://dx.doi.org/10.1007/s10773-014-2085-x (arXiv:1403.0723)

%\bibitem{denisbe}
K. Ding, G. Ma, M. Xiao, Z.-Q. Zhang and C.-T. Chan,
%Emergence, coalescence, and topological properties
%of multiple exceptional points and their experimental realization.
Phys. Rev. X 6 (2016) 021007;

D. I. Borisov and M. Znojil, in F. Bagarello, R. Passante
and C. Trapani, eds,
Non-Hermitian Hamiltonians in Quantum Physics.
Springer, Basel, 2016, p. 201.
%
% "Mathematical and physical
%meaning of the crossings of energy levels in PT-symmetric systems"


H. Jing, S. K. \"{O}zdemir, H. Lu et al.,
Sci. Rep. 7 (2017) 3386.
% SCIENTIFIC REPORTS  Volume: 7
%
%High-order exceptional points in optomechanics By: Jing, H.;
%Ozdemir, S. K.; Lu, H.; et al. SCIENTIFIC REPORTS  Volume: 7
%Article Number: 3386   Published: JUN 13 2017

%T. Goldzak, A. A. Mailybaev and N. Moiseyev,
%Phys. Rev. Lett. 120 (2018) 013901;
%%Light Stops at Exceptional Points By: Goldzak, Tamar; Mailybaev,
%%Alexei A.; Moiseyev, Nimrod PHYSICAL REVIEW LETTERS  Volume: 120
%%Issue: 1   %  Article Number: 013901   Published: JAN 3 2018
%
%M.-A. Miri and A. Al\'{u},
%%ú, A. Exceptional points in optics and photonics.
%Science 363 (2019) eaar7709.


\bibitem{Zeeman}
E. C. Zeeman,
 Catastrophe Theory-Selected Papers 1972-1977. Addison-Wesley, Reading, 1977;

V. I. Arnold,
%Vladimir Igorevich.
Catastrophe Theory. Springer-Verlag, Berlin, 1992.

\bibitem{scirep}
M. Znojil,
% Quantum phase transitions in nonhermitian harmonic oscillator.
Sci. Rep. 10 (2020) 18523.
%in print, DOI:10.1038/s41598-020-75468-w, arXiv:2008.04012.
%Miloslav Znojil, Quantum phase transitions in nonhermitian harmonic
%oscillator. Scientific Reports 10(1) (2020) 18523 OPEN ACCESS (paid by
%UHK) Published: ? OCT 28 2020: DOI: 10.1038/s41598-020-75468-w, 9 pages,



\bibitem{oeiseven}
The~On-Line Encyclopedia of Integer Sequences (OEIS), item A336739,
https://oeis.org/A336739
(accessed Oct 27th, 2020).

\bibitem{oeisodd}
The~On-Line Encyclopedia of Integer Sequences (OEIS), item A335631,
https://oeis.org/A335631
(accessed Oct 27th, 2020).

\end{thebibliography}
\end{document}